\theoremstyle{plain}
\newtheorem{theorem}{Theorem}[section]
\newtheorem{proposition}[theorem]{Proposition}
\newtheorem{lemma}[theorem]{Lemma}
\theoremstyle{definition}
\newtheorem{example}[theorem]{Example}
\newtheorem{question}[theorem]{Question}
\newcommand{\kerOp}[0]{\operatorname{ker}}
\newcommand {\Set}[1] {\mathbb{#1}}
\newcommand{\setR}[0]{\Set{R}}
\newcommand{\setC}[0]{\Set{C}}
\newcommand{\slaz}[0]{{\setminus\{0\}}}
\newcommand{\vq}[0]{{\mbox{\boldmath $\xi$}}}
\newcommand{\cA}[0]{{\mathscr{A}}}
\newcommand{\cB}[0]{{\mathscr{B}}}
\newcommand{\cC}[0]{{\mathscr{C}}}
\newcommand{\cD}[0]{{\mathscr{D}}}
\newcommand{\cK}[0]{{\mathscr{K}}}
\newcommand{\cG}[0]{{\mathscr{G}}}
\newcommand {\proofBox}[0]{\hfill $\Box$ }
\newcommand {\proofread}[1]{ }
\newcommand {\IMAG}[0]{ \operatorname{Im}}
\newcommand {\REAL}[0]{ \operatorname{Re}}
\newcommand{\pd}[2]{\frac{\partial #1}{\partial #2}}
\newcommand{\vfield}[1]{{\mathfrak X}( #1)}
\newcommand{\kappaI}[0]{^{(1)}\!\kappa}
\newcommand{\kappaII}[0]{^{(2)}\!\kappa}
\newcommand{\kappaIII}[0]{^{(3)}\!\kappa}
\newcounter{saveenum}
\title{Determination of electromagnetic medium from the Fresnel surface}
\author[Dahl]{Matias F. Dahl}
\address{
Matias Dahl,
Aalto University,
Mathematics,
P.O. Box 11100,
FI-00076 Aalto,
Finland
}
\urladdr{
http://www.math.tkk.fi/\textasciitilde{}fdahl/}
\subjclass[2000]{
78A25, 
83C50, 
53C50, 
78A02, 
78A05}  
\keywords{Maxwell's equations, propagation of electromagnetic waves, Tamm-Rubilar tensor density, closure condition, geometric optics}
\date{\today}
\begin{document}
\begin{abstract}
  We study Maxwell's equations on a $4$-manifold where the
  electromagnetic medium is described by an antisymmetric $2\choose
  2$-tensor $\kappa$.  In this setting, the Tamm-Rubilar tensor
  density determines a polynomial surface of fourth order  in each
  cotangent space. This surface is called the Fresnel surface and acts
  as a generalisation of the light-cone determined by a Lorentz
  metric; the Fresnel surface parameterises electromagnetic wave-speed
  as a function of direction.  
  Favaro and Bergamin have recently proven that if $\kappa$ has only a
 principal part and if the Fresnel surface of $\kappa$ coincides with
 the light cone for a Lorentz metric $g$, then $\kappa$ is
 proportional to the Hodge star operator of $g$. 
That is, under additional assumptions, the Fresnel surface of $\kappa$
determines the conformal class of $\kappa$.
The purpose of this paper is twofold. First, we provide
 a new proof of this result using  Gr\"obner bases.  
Second, we describe a number of cases where
 the Fresnel surface does not determine the conformal
 class of the original $2\choose 2$-tensor $\kappa$.
For example, if $\kappa$ is invertible we show that
$\kappa$ and $\kappa^{-1}$ have the same
Fresnel surfaces.
\end{abstract}

\maketitle

\section{Introduction}
\label{sec:Intro}
The purpose of this work is to study properties of propagating
electromagnetic fields in linear medium. We will work in a
relativistic setting where Maxwell's equations are written on a
$4$-manifold and the electromagnetic medium is represented by an
antisymmetric $2\choose 2$-tensor $\kappa$. Pointwise, such medium is determined by 
 $36$ parameters. 
 To understand the propagation of an electromagnetic wave in this setting,  the key object 
 is the \emph{Fresnel surface}, which can be seen a
 generalisation of the light-cone
\cite{Rubilar2002, Obu:2003,  PunziEtAl:2009}.
 For a Lorentz metric, the light-cone is always a polynomial surface of second order
in each cotangent space. 
The Fresnel surface, in turn, is a polynomial
 surface of fourth order. For example, the
 Fresnel surface can be the union of two light-cones. This allows the
 Fresnel surface to model propagation also in birefringent medium. That
 is, in medium where differently polarised electromagnetic waves can
 propagate with different wave speeds.

 The Fresnel surface is determined by the \emph{Tamm-Rubilar tensor
   density} which is a symmetric $4\choose 0$-tensor density, which,
 in turn, is determined by the medium $2\choose 2$-tensor $\kappa$.
 This dependence is illustrated in the diagram below:
\begin{eqnarray*}
\label{ill:twoArrow}
\quad\mbox{Medium}\,\, \kappa\quad\to\quad \mbox{Tamm-Rubilar tensor density}\quad\to\quad \mbox{Fresnel surface}.
\end{eqnarray*}
In Lorentz geometry, we know the the light cone of a Lorentz metric
$g$ uniquely determine $g$ up to a conformal factor
\cite{Ehrlich:1991}.  In this work we will study the analogue relation
between a general electromagnetic medium tensor $\kappa$ and its
Fresnel surface; Can one reconstruct an electromagnetic medium from
its Fresnel surface?  In general, a unique reconstruction is not
possible.  For example, the Fresnel surface is invariant under a
conformal change in the medium. Hence the Fresnel surface can, at
best, determine $\kappa$ up to a conformal factor.  One would then
like to understand the following question:

\begin{question}
\label{mainProblem}
Under what assumptions does the Fresnel surface at a point $p\in N$
determine the electromagnetic medium $\kappa\vert_p$ up to a conformal
factor?
\end{question}


In terms of physics, Question \ref{mainProblem} asks when we can
reconstruct $\kappa\vert_p$ (up to a conformal factor) using only
wavespeed information about the medium at $p$. A proper understanding
of this question is not only of theoretical interest, but also of
interest in engineering applications like electromagnetic tomography.
Question \ref{mainProblem} is also similar is spirit to a question in
general relativity, where one would like to understand when the the
conformal class of a Lorentz metric can be determined by the five
dimensional manifold of null-geodesics \cite{Low:2005}.

Favaro and Bergamin have recently proven the following result of
positive nature \cite{FavaroBergamin:2011}: If $\kappa$ has only a
principal part and if the Fresnel surface of $\kappa$ coincides with the
light cone for a Lorentz metric $g$, then $\kappa$ is proportional to
the Hodge star operator of $g$.  That is, in a restricted class of medium,
the Fresnel surface of $\kappa$ determines the conformal class of
$\kappa$.  An important corollary 
is the following: If $\kappa$ has only a principal part and its
Fresnel surface coincides with the light cone for a Lorentz metric,
then $\kappa$ satisfies the \emph{closure condition} $\kappa^2=-f
\operatorname{Id}$ for a function $f\colon N\to (0,\infty)$.
This resolves a conjecture on whether the closure condition
characterises non-birefringent medium in skewon-free medium
\cite{ObuFukRub:00, Obu:2003}.  That the closure condition is sufficient, was
already proven in \cite{ObukhovHehl:1999, ObuFukRub:00}, but before
\cite{FavaroBergamin:2011} sufficiency was only known under additional
assumptions; a proof assuming that $\cC=0$ (see Section
\ref{sec:ABCDtransRules} for the definition of $\cC$ in terms on
$\kappa$) is given in \cite{ObuFukRub:00}, and a proof in a special
class of non-linear medium is given in \cite{ObuRub:2002}. 
For additional positive results to Question \ref{mainProblem}, see
\cite{LamHeh:2004, Itin:2005, Schuller:2010, FavaroBergamin:2011}.

The main contribution of this paper is twofold. First, we give a new
proof of the result quoted above from \cite{FavaroBergamin:2011}.
This is formulated as implication \ref{coIII} $\Rightarrow$ \ref{coII}
in Theorem \ref{thm:mainResult}.  While the original proof in
\cite{FavaroBergamin:2011} relies on the classification of skewon-free
$2\choose 2$-tensors into 23 normal forms by Schuller, Witte, and
Wohlfarth \cite{Schuller:2010}, we will use Gr\"obner bases to prove
Theorem \ref{thm:mainResult}. Essentially, Gr\"obner bases is a
computer algebra technique for simplifying a system polynomial
equations without changing the solution set. See Appendix
\ref{app:Groebner}.

The second contribution of this paper is given in Section
\ref{sec:uni} which contains a number of cases, where the Fresnel
surface does not determine $\kappa$.  In Theorem \ref{thm:FkappaInvB}
\ref{thm:FkappaInvB:iv} we show that if $\kappa$ is invertible, then
$\kappa$ and $\kappa^{-1}$ have the same Fresnel surfaces.  Also, in
Example \ref{ex:complexExample} we construct a $\kappa$ with complex
coefficients on $\setR^4$.  At each $p\in \setR^4$, this medium is
determined by one arbitrary complex number, and hence the medium can
depend on both time and space. However, at each point, the Fresnel
surface of $\kappa$ coincides with the usual light cone of the flat
Minkowski metric $g=\operatorname{diag}(-1,1,1,1)$.

The paper is organised as follows.  In Section \ref{mainSec} we review
Maxwell's equations and linear electromagnetic medium on a
$4$-manifold.  In Section \ref{sec:GOS} we describe how the
Tamm-Rubilar tensor density and Fresnel surface is related to wave
propagation.  To derive these objects we use the approach of geometric
optics.
As described in Section \ref{sec:GOS}, this can be seen as a step
towards a relativistic theory of electromagnetic Gaussian beams (if
such a theory exists).  In general, Gaussian beams is an asymptotic
technique for studying propagation of waves in hyperbolic systems.
These solutions behave as wave packets; at each time instant, the
entire energy of the solution is concentrated around one point in
space. When time moves forward, the beam propagates along a curve, but
always retains its shape of a Gaussian bell curve. Electromagnetic
Gaussian beams are also known as quasi-photons \cite{Kachalov:2002,
  Kachalov:2004, Kachalov:2005, DahlPIER:2006}.  For the wave
equation, see \cite{Ralston:1982, KKL:2001}. For the history of
Gaussian beams, see \cite{Ralston:1982, Popov:2002}.
In Section \ref{sec:Closure} we prove the main result Theorem
\ref{thm:mainResult}, and in Section \ref{sec:uni} we describe a
number of cases where Question \ref{mainProblem} has a negative
answer.

This paper relies on a number of computations done with computer algebra. 
Further information about these can be found on the author's homepage.

\section{Maxwell's equations}
\label{mainSec}
By a \emph{manifold} $M$ we mean a second countable topological Hausdorff
space that is locally homeomorphic to $\setR^n$ with $C^\infty$-smooth
transition maps. All objects are assumed to be smooth where defined.  
Let $TM$ and $T^\ast M$ be the tangent and cotangent bundles,
respectively, and for $k\ge 1$, let $\Lambda^k(M)$ be the set of
$p$-covectors, so that $\Lambda^1(N)=T^\ast N$.  Let $\Omega^k_l(M)$
be $k\choose l$-tensors that are antisymmetric in their $k$ upper
indices and $l$ lower indices. In particular, let $\Omega^k(M)$ be the
set of $k$-forms. Let also $\vfield{M}$ be the set of vector fields,
and let $C^\infty(M)$ be the set of functions. By $\Omega^k(M)\times
\setR$ we denote the set of $k$-forms that depend smoothly on a
parameter $t\in \setR$.
By $T(M,\setC)$, $T^\ast(M,\setC)$, $\Lambda^p(M,\setC)$,
$\Omega^k_l(M,\setC)$ and $\vfield{M,\setC}$ we denote the
complexification of the above spaces where component may also
take complex values. Smooth complex valued functions are denoted
by $C^\infty(M,\setC)$.
%
The Einstein summing convention is used throughout. When writing tensors 
in local coordinates we assume that the components satisfy the same symmetries as
the tensor. 

We will use differential forms to write Maxwell's equations.  On a
$3$-manifold $M$, \emph{Maxwell equations} then read \cite{BH1996,
 Obu:2003}
\begin{eqnarray}
\label{max1}
dE &=& - \pd{B}{t} , \\  
\label{max2}
dH &=& \pd{D}{t} + J, \\
\label{max3}
dD &=& \rho, \\
\label{max4}
dB &=& 0,
\end{eqnarray}
for field quantities $E,H\in \Omega^1 (M)\times \setR$, $D,B\in
\Omega^2 (M)\times \setR$ and sources $J\in \Omega^2(M)\times \setR$
and $\rho \in \Omega^3(M)\times \setR$. Let us emphasise that
equations \eqref{max1}--\eqref{max4} are completely
differential-topological and do not depend on any additional
structure. (To be precise, the exterior derivative does depend on the
smooth structure of $M$. However, for a manifold $M$ of dimension
$1,2,3$ one can show that all smooth structures for $M$ are
diffeomorphic. 
For higher dimensions the analogue result is not true.  Even for
$\setR^4$ there are uncountably many non-diffeomorphic smooth
structures \cite[p.~255]{Scorpan:2005}.)

\subsection{Maxwell's equations on a $4$-manifold}
\label{sec:MaxOn4}
Suppose $E,D,B,H$ are time dependent forms $E,H\in \Omega^1(M)\times
\setR$ and $D,B\in \Omega^2(M)\times \setR$ and $N$ is the
$4$-manifold $N=\setR\times M$. Then we can define forms $F,G \in
\Omega^2(N)$ and $j\in \Omega^3(N)$,
\begin{eqnarray}
\label{Fdef}
 F &=& B + E\wedge dt, \\
\label{Gdef}
 G &=& D - H\wedge dt,\\
 j &=& \rho-J\wedge dt.
\end{eqnarray}
Now fields $E,D,B,H$ solve Maxwell's equations equations
\eqref{max1}--\eqref{max4} if and only if
\begin{eqnarray}
\label{max4A}
dF &=& 0, \\
\label{max4B}
dG &=& j,
\end{eqnarray}
where $d$ is the exterior derivative on $N$. More generally, if $N$ is
a $4$-manifold and $F,G, j$ are forms $F,G\in \Omega^2(N)$ and $j\in
\Omega^3(N)$ we say that $F,G$ solve \emph{Maxwell's equations} (for
a source $j$) when equations \eqref{max4A}--\eqref{max4B} hold.
By an \emph{electromagnetic medium} on $N$
we mean a map 
\begin{eqnarray*}
   \kappa \colon \Omega^2(N) &\to& \Omega^2(N).
\end{eqnarray*} 
We then say that $2$-forms $F,G\in \Omega^2(N)$ \emph{solve Maxwell's
  equations in medium $\kappa$} if  $F$ and $G$ satisfy equations
\eqref{max4A}--\eqref{max4B} and
\begin{eqnarray}
\label{FGchi}
  G &=& \kappa(F).
\end{eqnarray}
Equation \eqref{FGchi} is known as the \emph{constitutive equation}.
If $\kappa$ is invertible, 
it follows that one can eliminate half of the free variables in
Maxwell's equations \eqref{max4A}--\eqref{max4B}.
We assume that $\kappa$ is linear and local so that we can represent
$\kappa$ by an antisymmetric $2\choose 2$-tensor $\kappa \in
\Omega^2_2(N)$. If in coordinates $\{x^i\}_{i=0}^3$ for $N$ we have
\begin{eqnarray}
\label{eq:kappaLocal}
  \kappa &=& \frac 1 2 \kappa^{ij}_{lm} dx^l\otimes dx^m\otimes \pd{}{x^i}\otimes \pd{}{x^j}
\end{eqnarray}
and $F = F_{ij} dx^i \otimes dx^j$ and $G = G_{ij} dx^i \otimes dx^j$, 
then constitutive equation \eqref{FGchi} reads
\begin{eqnarray}
 \label{FGeq_loc}
   G_{ij} &=& \frac 1 2 \kappa_{ij}^{rs} F_{rs}.
\end{eqnarray}

\subsection{Decomposition of electromagnetic medium}
\label{media:decomp}
Let $N$ be a $4$-manifold. Then
at each point on $N$, a general antisymmetric $2\choose
2$-tensor depends on $36$ parameters. Such tensors canonically decompose
into three linear subspaces. The motivation for this decomposition is
that different components in the decomposition enter in different
parts of electromagnetics.  See \cite[Section D.1.3]{Obu:2003}.  
The below formulation is taken from \cite{Dahl:2009}.

If $\kappa\in \Omega^2_2(N)$ we define the
\emph{trace} of $\kappa$ as the smooth function $N\to \setR$ given by
\begin{eqnarray*}
\operatorname{trace} \kappa &=& \frac 1 2 \kappa_{ij}^{ij}
\end{eqnarray*}
when $\kappa$ is locally given by equation \eqref{eq:kappaLocal}.
Writing 
$\operatorname{Id}$ as in equation \eqref{eq:kappaLocal} gives
$\operatorname{Id}^{ij}_{rs}= \delta^i_r\delta^j_s-\delta^i_s\delta^j_r$, so 
$\operatorname{trace}\operatorname{Id} = 6$ when $\dim N=4$. 

\begin{proposition}[Decomposition of a  $2\choose 2$-tensors]
\label{theorem:Decomp}
Let $N$ be a $4$-manifold, and let
\begin{eqnarray*}
Z &=& \{ \kappa \in \Omega^2_2(N) : u\wedge \kappa(v) = \kappa(u)\wedge v \,\,\mbox{for all}\,\, u,v\in \Omega^2(N),\\
& & \quad\quad\quad\quad\quad\quad \operatorname{trace} \kappa = 0\},\\
W &=& \{ \kappa \in \Omega^2_2(N) : u\wedge \kappa(v) = -\kappa(u)\wedge v \,\,\mbox{for all}\,\, u,v\in \Omega^2(N)\} \\ 
 &=& \{ \kappa \in \Omega^2_2(N) : 
 u\wedge \kappa(v) = -\kappa(u)\wedge v \,\,\mbox{for all}\,\, u,v\in \Omega^2(N), \\
& & \quad\quad\quad\quad\quad\quad \operatorname{trace} \kappa = 0 \}, \\
U &=& \{ f \operatorname{Id}\in \Omega^2_2(N) : f\in C^\infty(N) \}.
\end{eqnarray*}
Then
\begin{eqnarray}
\label{AdecompSet}
 \Omega^2_2(N) &=& Z\,\,\oplus\,\, W \,\,\oplus\,\, U,
\end{eqnarray}
and pointwise, $\dim Z = 20$,  $\dim W = 15$ and  $\dim U = 1$.
\end{proposition}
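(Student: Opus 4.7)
The plan is to reduce to a pointwise linear-algebra statement on the $36$-dimensional fibre $\Omega^2_2(N)|_p \cong \operatorname{End}(\Lambda^2_p)$ (note $\dim \Lambda^2_p = 6$), and then exploit the wedge product as a symmetric non-degenerate pairing. The key observation is that
\[
    \mu(u,v) = u\wedge v, \qquad u,v\in \Lambda^2_p,
\]
is a non-degenerate $\Lambda^4_p$-valued pairing on $\Lambda^2_p$, and is \emph{symmetric} because $(-1)^{2\cdot 2}=1$.

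First I would send $\kappa \mapsto B_\kappa$, where $B_\kappa(u,v) = u\wedge \kappa(v)$; non-degeneracy of $\mu$ makes this a linear isomorphism from $\Omega^2_2|_p$ onto $\Lambda^4_p$-valued bilinear forms on $\Lambda^2_p$. The standard splitting of bilinear forms into symmetric and antisymmetric parts then transfers to a splitting $\Omega^2_2|_p = S \oplus W$. Symmetry of $\mu$ shows that $B_\kappa$ is symmetric iff $u\wedge \kappa(v) = \kappa(u)\wedge v$ (the defining equation appearing in $Z$) and antisymmetric iff $u\wedge \kappa(v) = -\kappa(u)\wedge v$ (the defining equation for $W$). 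The dimensions are $\dim S = \binom{7}{2} = 21$ and $\dim W = \binom{6}{2} = 15$.

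Next I would observe that $\operatorname{Id}\in S$ (since $u\wedge \operatorname{Id}(v)=u\wedge v=\operatorname{Id}(u)\wedge v$) and has trace $6$, so the trace restricts to a surjective linear functional $S\to\setR$ whose kernel is precisely the subspace of symmetric trace-free elements, which is $Z$ by definition. This gives $S = Z \oplus U$ with $\dim Z = 20$ and $\dim U = 1$, and therefore $\Omega^2_2|_p = Z\oplus W\oplus U$ with $20+15+1=36$, matching the total dimension.

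Finally, to justify the second description of $W$ one must verify that $\kappa \in W$ automatically implies $\operatorname{trace}\kappa = 0$. In a basis $\{e_I\}$ of $\Lambda^2_p$ with Gram matrix $\eta_{IJ}=\mu(e_I,e_J)$ (symmetric and invertible), the matrix of $B_\kappa$ equals $\eta K$, where $K$ is the matrix of $\kappa$; thus $\operatorname{trace}\kappa = \eta^{IJ}(\eta K)_{JI}$, which vanishes because $\eta^{IJ}$ is symmetric while $(\eta K)_{JI}$ is antisymmetric. The hard part is not really computational but conceptual: recognising that the symmetric non-degenerate wedge pairing converts the decomposition problem for $\kappa$ into the standard symmetric/antisymmetric decomposition of bilinear forms on an inner-product space. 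Once that identification is set up with the right sign conventions, the remaining work is bookkeeping and a dimension count.
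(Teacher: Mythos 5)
Your proof is correct and is, in substance, the standard argument cited by the paper (which gives no proof of its own and refers to \cite[Section~D.1.3]{Obu:2003}): pass to the pointwise identification $\Omega^2_2(N)|_p\cong\operatorname{End}(\Lambda^2_p)$, use the non-degenerate symmetric wedge pairing on the $6$-dimensional space $\Lambda^2_p$ to identify endomorphisms with $\Lambda^4_p$-valued bilinear forms, and split the $36$-dimensional space into symmetric ($21$) and antisymmetric ($15$) parts, then peel off the trace direction. Your verification that the trace automatically vanishes on the antisymmetric part, via the Gram matrix $\eta$ and the contraction $\eta^{IJ}(\eta K)_{JI}$, is exactly the right way to justify the second description of $W$, and the dimension count $20+15+1=36$ closes the argument.
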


If we write a $\kappa\in \Omega^2_2(N)$ as
\begin{eqnarray*}
  \kappa &=& \kappaI \,\,+ \,\,\kappaII\,\,+\,\,\kappaIII
\end{eqnarray*}
with $\kappaI\in Z$, $\kappaII\in W$, $\kappaIII\in U$, then we say that 
$\kappaI$ is the \emph{principal part},
$\kappaII$ is the \emph{skewon part},
$\kappaIII$ is the \emph{axion part} of $\kappa$.


\subsection{The Hodge star operator} 
\label{sec:Hodge}
By a \emph{pseudo-Riemann metric} on a manifold $M$ we mean a symmetric real
$0\choose 2$-tensor $g$ that is non-degenerate. If $M$ is not
connected we also assume that $g$ has constant signature.  
If $g$ is positive definite, we say that $g$ is a \emph{Riemann metric}.
By $\sharp$
and $\flat$ we denote the isomorphisms $\sharp\colon T^\ast M\to TM$
and $\flat\colon TM\to T^\ast M$. By $\setR$-linearity we extend $g$,
$\sharp$ and $\flat$ to complex arguments. Moreover, we extend $g$
also to covectors by setting $g(\xi,\eta)=g(\xi^\sharp,\eta^\sharp)$
when $\xi,\eta\in \Lambda^1_p(N,\setC)$.

Suppose $g$ is a pseudo-Riemann metric on a
orientable manifold $M$ with $n=\dim M\ge 2$.  For $p\in\{0,\ldots,
n\}$, the \emph{Hodge star operator} $\ast$ is the map $\ast\colon
\Omega^p(M)\to \Omega^{n-p}(M)$ defined as
\cite[p. 413]{AbrahamMarsdenRatiu:1988}
\begin{eqnarray*}
\label{hodgedef}
\ast(dx^{i_1} \wedge \cdots \wedge dx^{i_p}) &=& \frac{\sqrt{|\det g|}}{(n-p)!} g^{i_1 l_1}\cdots  g^{i_p l_p} \varepsilon_{l_1 \cdots l_p\, l_{p+1} \cdots l_n} dx^{l_{p+1}}\wedge \cdots \wedge  dx^{l_{n}},
\end{eqnarray*}
where $x^i$ are local coordinates in an oriented atlas,
$g=g_{ij}dx^i\otimes dx^j$, $\det g= \det g_{ij} $, $g^{ij}$ is the
$ij$th entry of $(g_{ij})^{-1}$, and $\varepsilon_{l_1\cdots l_n}$ is
the \emph{Levi-Civita permutation symbol}. We treat
$\varepsilon_{l_1\cdots l_n}$ as a purely combinatorial object (and
not as a tensor density). We also define $\varepsilon^{l_1\cdots l_n}=
\varepsilon_{l_1\cdots l_n}$.

If $g$ is a pseudo-Riemann metric on an oriented
$4$-manifold $N$, then the Hodge star operator for $g$ induces a
$2\choose 2$-tensor $\kappa=\ast_g\in \Omega^2_2(N)$. If $\kappa$ is
written as in equation \eqref{eq:kappaLocal} for local coordinates
$x^i$ then
\begin{eqnarray}
\label{eq:hodgeKappaLocal}
  \kappa^{ij}_{rs} &=& \sqrt{\vert g\vert} g^{ia}g^{jb} \varepsilon_{abrs}.
\end{eqnarray}

\begin{proposition} 
\label{prop:HodgeHasOnlyPrincipalPart}
Suppose $g$ is a pseudo-Riemann metric on an orientable $4$-manifold $N$. 
Then $\ast_g$ defines a $2\choose 2$-tensor with only a principal part.
\end{proposition}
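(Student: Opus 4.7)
The plan is to show that $\ast_g$ lies in the summand $Z$ of the decomposition in Proposition~\ref{theorem:Decomp}; that is, $\ast_g$ has vanishing skewon and axion parts. By the characterisation of $Z$ given there, this reduces to verifying two conditions: (a) the symmetry relation $u \wedge \ast_g(v) = \ast_g(u) \wedge v$ for all $u,v \in \Omega^2(N)$, and (b) $\operatorname{trace} \ast_g = 0$.

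For (a), I would invoke the standard defining property of the Hodge operator on a pseudo-Riemannian manifold: for $p$-forms $u,v$ one has $u \wedge \ast v = \langle u, v\rangle_g\,\mathrm{vol}_g$, where $\langle\cdot,\cdot\rangle_g$ is the symmetric bilinear form induced by $g$ on $\Lambda^p$. Symmetry of this bilinear form gives $u \wedge \ast v = v \wedge \ast u$. Since $v$ and $\ast u$ are both $2$-forms on a $4$-manifold, $v \wedge \ast u = (-1)^{2\cdot 2}\,\ast u \wedge v = \ast u \wedge v$, which is (a). (For readers who prefer, the same identity can be obtained directly from \eqref{eq:hodgeKappaLocal} via a short coordinate computation using the antisymmetrisation identity for two Levi--Civita symbols.)

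For (b), I would work in coordinates using \eqref{eq:hodgeKappaLocal}:
\begin{eqnarray*}
2\,\operatorname{trace} \ast_g \;=\; \kappa^{ij}_{ij} \;=\; \sqrt{|g|}\; g^{ia} g^{jb} \varepsilon_{abij}.
\end{eqnarray*}
The factor $g^{ia} g^{jb}$ is invariant under the exchange of the dummy indices $i \leftrightarrow a$ (since $g^{ia} = g^{ai}$), whereas the same exchange on $\varepsilon_{abij}$ corresponds to swapping its first and third slots, producing a sign change. Renaming dummies on the right-hand side therefore yields $g^{ia} g^{jb} \varepsilon_{abij} = -g^{ia} g^{jb} \varepsilon_{abij}$, so the sum vanishes.

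I do not anticipate any serious obstacle: once the two defining properties of $Z$ are in hand, the verification is routine. The only point requiring mild care is matching the index conventions of \eqref{eq:kappaLocal} and \eqref{eq:hodgeKappaLocal} with the map $\kappa\colon \Omega^2(N) \to \Omega^2(N)$ used in the characterisation of $Z$, but this is bookkeeping rather than a real difficulty, and the argument is insensitive to the signature of $g$.
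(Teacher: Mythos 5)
Your proof is correct and follows essentially the same structure as the paper's: reduce to verifying the two defining conditions of $Z$ from Proposition~\ref{theorem:Decomp}, namely the wedge-symmetry $u\wedge\ast_g(v)=\ast_g(u)\wedge v$ and $\operatorname{trace}\ast_g=0$. The only minor differences are cosmetic: the paper cites the wedge-symmetry directly from Abraham--Marsden--Ratiu rather than rederiving it from $u\wedge\ast v=\langle u,v\rangle_g\,\mathrm{vol}_g$, and for the trace it diagonalises $g$ at a point instead of using your dummy-index relabelling $i\leftrightarrow a$ in arbitrary coordinates; both arguments are equally valid.
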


\begin{proof} Let $\kappa$ be the $2\choose 2$-tensor induced by $\ast_g$.
Then $u\wedge \kappa (v) = \kappa(u)\wedge v$ for
  all $u,v\in \Omega^2(N)$ \cite[p. 412]{AbrahamMarsdenRatiu:1988}. By Theorem \ref{theorem:Decomp}
  it therefore suffices to prove that $\operatorname{trace} \kappa =
  0$. Let us fix $p\in N$ and let  $x^i$ are local coordinates for $N$ such
that $g\vert_p$ is diagonal. If $\kappa$ is written as in equation
\eqref{eq:kappaLocal} then equation \eqref{eq:hodgeKappaLocal} implies that
$\operatorname{trace}\kappa=\frac 1 2 \kappa^{ij}_{ij}=0$  since $g^{ij}$ is diagonal and 
$\varepsilon_{ijkl}$ is non-zero only when $ijkl$ are distinct. 
\end{proof}

A pseudo-Riemann metric $g$ is a \emph{Lorentz metric} if $M$ is
$4$-dimensional and $g$ has signature $(+---)$ or $(-+++)$. 
For a Lorentz metric, we define the \emph{null cone} at $p$ as the set
$
   \{\xi\in \Lambda^1_p(M,\setR) : g(\xi,\xi)=0\}.
$
Usually, the null cone is defined as a subset in the tangent bundle. The
motivation for treating the null-cone in the cotangent bundle is given by
equation \eqref{eq:FresIsNullCone}.

The next theorem shows that the conformal class of a Lorentz metric $g$ can
be represented either using the $2\choose 2$-tensor $\ast_g$ or the
null cone of $g$.

\begin{theorem} 
\label{prop:Principle}
Suppose $g,h$ are  Lorentz metrics on an orientable $4$-manifold $N$. Then
the following are equivalent:
\begin{enumerate}
\item 
\label{eq:xx1}
There exists a non-vanishing function $\lambda\in C^\infty(N)$ such that $h=\lambda g$. 
\item 
\label{eq:xx2}
$\ast_g = \ast_h$, where $\ast_g$ and $\ast_h$ are the $2\choose 2$-tensors defined by $g$ and $h$, respectively.
\item
\label{eq:xx3} 
$g$ and $h$ have the same null cones.
\end{enumerate}
\end{theorem}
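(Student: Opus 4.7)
The plan is to establish the cyclic chain (i)$\Rightarrow$(ii)$\Rightarrow$(iii)$\Rightarrow$(i). The two easy endpoints are (i)$\Rightarrow$(ii), a direct coordinate computation using \eqref{eq:hodgeKappaLocal}, and (iii)$\Rightarrow$(i), which is the classical fact that a Lorentz metric is determined by its null cone up to a non-vanishing conformal factor (see \cite{Ehrlich:1991}). The genuinely non-trivial step is (ii)$\Rightarrow$(iii), which I will handle by extracting the null cone of $g$ directly from the $2\choose 2$-tensor $\ast_g$.

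For (i)$\Rightarrow$(ii), suppose $h=\lambda g$ with $\lambda$ non-vanishing. Then $\det(h_{ij})=\lambda^{4}\det(g_{ij})$, so $\sqrt{|\det h|}=\lambda^{2}\sqrt{|\det g|}$ regardless of the sign of $\lambda$, while $h^{ij}=\lambda^{-1}g^{ij}$. Substituting into \eqref{eq:hodgeKappaLocal}, the factor $\lambda^{2}$ from the volume density cancels the two factors of $\lambda^{-1}$ coming from $h^{ia}h^{jb}$, giving $(\ast_h)^{ij}_{rs}=(\ast_g)^{ij}_{rs}$.

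For (ii)$\Rightarrow$(iii), fix a point $p$ and, for $\xi\in \Lambda^1_p(N)$, consider the symmetric bilinear map
\[
B_\xi\colon \Lambda^1_p(N)\times\Lambda^1_p(N)\to \Lambda^4_p(N),\qquad B_\xi(\eta_1,\eta_2)=(\xi\wedge\eta_1)\wedge\ast_g(\xi\wedge\eta_2).
\]
Using the standard identity $\omega\wedge\ast_g\rho=g(\omega,\rho)\operatorname{vol}_g$ for $2$-forms $\omega,\rho$, a short calculation delivers
\[
B_\xi(\eta_1,\eta_2)=\bigl[g(\xi,\xi)\,g(\eta_1,\eta_2)-g(\xi,\eta_1)\,g(\xi,\eta_2)\bigr]\operatorname{vol}_g.
\]
Since $\Lambda^4_p(N)$ is one-dimensional, the rank of $B_\xi$ is a well-defined integer that depends only on $\ast_g$, independent of any trivialization of the target. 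An elementary check on the coefficient matrix $g(\xi,\xi)g^{\mu\nu}-\xi^{\mu}\xi^{\nu}$ shows that, for $\xi\neq 0$, this rank equals $1$ precisely when $\xi$ is $g$-null and equals $3$ otherwise. Hence the null cone of $g$ at $p$ is encoded in $\ast_g\vert_p$, and applying the same construction to $h$ with $\ast_g=\ast_h$ yields identical null cones.

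The core obstacle is exactly the step (ii)$\Rightarrow$(iii): one must read information about $1$-forms (nullness) out of a tensor that only acts on $2$-forms. The wedge-with-$\xi$ trick above is the natural bridge, and its rank is intrinsic precisely because the target $\Lambda^4_p(N)$ is one-dimensional, so no choice of volume form is needed to extract it.
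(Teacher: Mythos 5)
Your proposal is correct, and it takes a genuinely different route from the paper. The paper establishes the equivalences in a ``star'' shape: it observes that \ref{eq:xx1} $\Rightarrow$ \ref{eq:xx2} and \ref{eq:xx1} $\Rightarrow$ \ref{eq:xx3} are clear, and then cites Dray--Kulkarni--Samuel for \ref{eq:xx2} $\Rightarrow$ \ref{eq:xx1} and Ehrlich for \ref{eq:xx3} $\Rightarrow$ \ref{eq:xx1}. You instead close a cycle \ref{eq:xx1} $\Rightarrow$ \ref{eq:xx2} $\Rightarrow$ \ref{eq:xx3} $\Rightarrow$ \ref{eq:xx1}, still citing Ehrlich for the last arrow but replacing the Dray citation with a self-contained argument for \ref{eq:xx2} $\Rightarrow$ \ref{eq:xx3}. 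The key mechanism — the $\Lambda^4_p(N)$-valued bilinear form $B_\xi(\eta_1,\eta_2)=(\xi\wedge\eta_1)\wedge\ast_g(\xi\wedge\eta_2)$, whose coefficient matrix $g(\xi,\xi)g^{\mu\nu}-\xi^\mu\xi^\nu$ has rank $1$ on the null cone and rank $3$ off it — is correct, and the observation that rank is intrinsic because the target is one-dimensional is exactly right. It is worth noting that this matrix is precisely the $H^{ir}$ that appears in equation \eqref{eq:Cik} in the proof of Proposition \ref{prop:FresnelForRiemann}, and the rank dichotomy is implicit in the spectrum $\sigma(H)=(0,\,C_1 g(\xi,\xi),\,C_2 g(\xi,\xi),\,C_3\sum_i\xi_i^2)$ computed there, so your argument could in principle be spliced in at essentially no extra cost. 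Trade-offs: Dray's theorem is stronger (it recovers the conformal metric directly from $\ast_g$, not merely the null cone), so the paper's \ref{eq:xx2} $\Rightarrow$ \ref{eq:xx1} bypasses Ehrlich entirely for that arrow; your version extracts only the null cone and relies on Ehrlich to finish, but the payoff is an explicit, elementary, and self-contained demonstration of how the conformal structure sits inside the Hodge star.
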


\begin{proof}
  Implications \ref{eq:xx1} $\Rightarrow$ \ref{eq:xx2} and
  \ref{eq:xx1} $\Rightarrow$ \ref{eq:xx3} are clear.  Implication
  \ref{eq:xx2} $\Rightarrow$ \ref{eq:xx1} is proven in
  \cite[Theorem 1]{Dray:1989}, and implication \ref{eq:xx3} $\Rightarrow$
  \ref{eq:xx1} is proven in \cite[Theorem 3]{Ehrlich:1991}. 
  See also \cite{MingSan:2008}.
\end{proof}

\subsection{Decomposition of $\kappa$ into four $3\times 3$ matrices}
\label{sec:ABCDtransRules} 
Suppose $(x^0, x^1, x^2, x^3)$ are local coordinates for
$N=\setR\times M$ such that $x^0$ is the coordinate for $\setR$ and
$(x^1, x^2, x^3)$ are coordinates for $M$. If forms $F,G$ are given by
equations \eqref{Fdef}--\eqref{Gdef}, then
$$
 F_{i0} = E_i, \quad 
F_{ij} = B_{ij}, \quad 
 G_{i0} = -H_i, \quad 
G_{ij} = D_{ij}
$$
for all $i,j=1,2,3$ and equation \eqref{FGeq_loc} then reads
\begin{eqnarray}
\label{eq:kappaLocal_I}
H_i &=& -\kappa_{i0}^{r0} E_r - \frac 1 2 \kappa_{i0}^{rs} B_{rs}, \\
\label{eq:kappaLocal_II}
D_{ij} &=& \kappa_{ij}^{r0} E_r +\frac 1 2 \kappa_{ij}^{rs} B_{rs},
\end{eqnarray}
where $i,j=1,2,3$ and $r,s$ are summed over $1,2,3$.

Next we show that in coordinates $(x^0, x^1,
x^2, x^3)$ the tensor $\kappa$ is represented by four $3\times
3$-matrices. To do this, let $\ast$ is the Hodge star operator
induced by the Euclidean metric on $x^1, x^2, x^3$ so that $\ast dx^i
= \frac 1 2 \sum_{a,b=1}^3 \varepsilon^{iab} dx^a \wedge dx^b$. 
Thus $B=
\sum_{i=1}^3 B^i \ast\!dx^i$ 
where $B^i =\frac 1 2 \varepsilon^{ijk} B_{jk}$ and $B_{mn} =
\varepsilon_{imn} B^i$.  In the same way we define $D^1, D^2, D^3$.  Now
components $D^i$ and $B^i$ represent $2$-forms $D$ and $B$ in the
basis $\{\ast dx^i \}_{i=1}^3$, and by equations
\eqref{eq:kappaLocal_I}--\eqref{eq:kappaLocal_II},
\begin{eqnarray}
\label{eq:kappaLocal_Ix}
  H_i &=& \cC^r\, _i (-E_r) + \cB_{ri}  B^{r}, \\
\label{eq:kappaLocal_IIx}
  D^{i} &=& \cA^{ri} (-E_r) +\cD_r\,^i B^{r},
\end{eqnarray}
where $i\in \{1,2,3\}$, $r$ is summed over $1,2,3$, and
$$
   \cC^r\,_i     = \kappa^{r0}_{i0}, \quad 
   \cB_{ri}        =  - \frac 1 2 \varepsilon_{rab} \kappa^{ab}_{i0}, \quad 
  \cA^{ri}       = -\frac 1 2 \varepsilon^{iab} \kappa^{r0}_{ab}, \quad  
  \cD_r\,\!^i     = \frac 1 4 \varepsilon_{rmn} \varepsilon^{iab} \kappa^{mn}_{ab}. 
$$
Here $r$ index rows and $i$ index columns in $3\times 3$ matrices 
$\cA, \cB, \cC, \cD$. 
Inverting the relations gives
$$
    \kappa^{0r}_{0i}    = \cC^r\,\! _i, \quad
    \kappa^{ij}_{0r}     = \varepsilon^{kij} \cB_{kr}, \quad
    \kappa^{0i}_{rs}    = \varepsilon_{krs} \cA^{ik}, \quad
    \kappa^{ij}_{rs}     = \varepsilon_{krs} \varepsilon^{lij} \cD_l\,\!^k,
$$
where $i,j,r,s\in \{1,2,3\}$ and $k, l$ are summed over $1,2,3$.

The above matrices $\cA, \cB, \cC, \cD$ coincide with the matrices
$\cA, \cB, \cC, \cD$ defined in \cite[Section D.1.6]{Obu:2003} and
\cite{Rubilar2002}.  Since these matrices are only part of tensor
$\kappa$, they do not transform in a simple way under a general
coordinate transformation in $N$ (see equations D.5.28--D.5.30 in
\cite{Obu:2003}).  However, if $\{x^i\}_{i=0}^3$ and $\{\widetilde
x^i\}_{i=0}^3$ are overlapping coordinates such that
\begin{eqnarray*}
  \widetilde x^0 &=& x^0,\\
  \widetilde x^i &=& \widetilde x^i(x^1, x^2, x^3), \quad i\in \{1,2,3\}.
\end{eqnarray*}
Then we have transformation rules 
\begin{eqnarray}
   \widetilde \cC^r\,_i  &=&   
\label{eq:transRuleC}
\cC^a\,_b \pd{ x^b}{\widetilde x^i}\pd{\widetilde x^r}{ x^a}, \\
   \widetilde \cB_{ri}    &=& 
\label{eq:transRuleB}
\det\left( \pd{\widetilde x^m}{ x^n}\right)  \cB_{ab} \pd{x^a}{\widetilde  x^r}\pd{x^b}{\widetilde  x^i}, \\
   {\widetilde \cA}^{ri}    &=& 
\label{eq:transRuleA}
\det\left( \pd{ x^m}{ \widetilde x^n}\right)  \cA^{ab} \pd{\widetilde x^r}{  x^a}\pd{\widetilde x^i}{  x^b}, \\
 \widetilde \cD_r\,\!^i     &=&  
\label{eq:transRuleD}
\cD_a\,\!^b   \pd{\widetilde  x^i}{ x^b}\pd{ x^a}{\widetilde  x^r}.
\end{eqnarray}

If $\kappaII=0$ then Proposition \ref{theorem:Decomp} implies that
$\kappa$ is pointwise determined by $21$ coefficients. The next
proposition shows that these coefficients can pointwise be reduced to
$18$ when the coordinates are chosen suitably.

\begin{proposition}
\label{prop:localReduction}
Suppose $N$ is a $4$-manifold and $\kappa\in \Omega^2_2(N)$.
Then 
\begin{enumerate}
\item 
\label{cc:AA}
$\kappa$ has no skewon component if and only if locally
$$
   \cA = \cA^T, \quad 
   \cB = \cB^T, \quad 
    \cC=\cD^T,
$$
where $^T$ is the matrix transpose, and $\cA, \cB, \cC, \cD$ are defined as above.
\item
\label{cc:BB}
Let $p\in N$. If $\kappa$ has no skewon component, then there are
local coordinates around $p$ such that $\cA$ is diagonal at $p$.
\item 
\label{cc:CC}
Let $p\in N$. If $\kappa$ has no skewon component and $g$ is a Lorentz
metric on $N$ there are local coordinates around $p$ such that
$\cA\vert_p$ is diagonal and for some $k\in \{\pm 1\}$ we have
$g\vert_p = k\operatorname{diag}(-1,1,1,1)$.
\end{enumerate}
\end{proposition}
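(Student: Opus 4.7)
The plan is to handle the three parts in sequence, using part \ref{cc:AA} as input for parts \ref{cc:BB} and \ref{cc:CC}.

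For part \ref{cc:AA}, I would first show that $\kappa$ has no skewon component if and only if
\begin{equation*}
u\wedge \kappa(v) = \kappa(u)\wedge v\qquad\text{for all } u,v\in \Omega^2(N).
\end{equation*}
One direction follows directly from the definitions of $Z$ and $U$ in Proposition \ref{theorem:Decomp}, together with the fact that $f\operatorname{Id}$ manifestly satisfies the displayed identity. For the converse, decomposing $\kappa=\kappaI+\kappaII+\kappaIII$ reduces the left-minus-right side to $2\,u\wedge \kappaII(v)$, whose vanishing for all $u,v$ forces $\kappaII=0$. To convert the identity into matrix conditions I would split $u=e\wedge dx^0+b$ and $v=e'\wedge dx^0+b'$ with spatial $e,b,e',b'$, and use \eqref{eq:kappaLocal_Ix}--\eqref{eq:kappaLocal_IIx} to write $\kappa(u)=d-h\wedge dx^0$ with analogous formulas for $\kappa(v)$. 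Expanding both wedge products, the purely spatial $4$-forms and the terms with two factors of $dx^0$ drop out, leaving the equivalent identity
\begin{equation*}
(e\wedge d' - b\wedge h')\wedge dx^0 = (d\wedge e' - h\wedge b')\wedge dx^0.
\end{equation*}
Reading off the coefficient of $dx^0\wedge dx^1\wedge dx^2\wedge dx^3$ and substituting the matrix formulas for $d,h,d',h'$ produces a quadratic identity in $e,b,e',b'$. Separating the four independent monomial types $ee'$, $bb'$, $eb'$, $be'$ yields, respectively, $\cA=\cA^T$, $\cB=\cB^T$, and the transpose relation $\cC=\cD^T$ (the latter arising once from $eb'$ and once from $be'$).

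For part \ref{cc:BB}, I would apply the transformation rule \eqref{eq:transRuleA} specialised to a coordinate change with $\widetilde x^0=x^0$, $\widetilde x^i=\widetilde x^i(x^1,x^2,x^3)$. Writing $J^r_a=\partial\widetilde x^r/\partial x^a$ at $p$, the rule reads $\widetilde\cA\vert_p = (\det J)^{-1}\,J\,\cA\vert_p\,J^T$ as $3\times 3$ matrices. By part \ref{cc:AA} the matrix $\cA\vert_p$ is symmetric, so the spectral theorem furnishes $O\in O(3)$ with $O\,\cA\vert_p\,O^T$ diagonal; choosing $J=O$ (realised near $p$ by an affine linear spatial change) gives the required coordinates.

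For part \ref{cc:CC}, I would chain a Sylvester-type standardisation of $g$ with the orthogonal diagonalisation from \ref{cc:BB}. A linear change at $p$ first brings $g\vert_p$ to $k\operatorname{diag}(-1,1,1,1)$ for some $k\in\{\pm 1\}$. A further change with $\widetilde x^0=x^0$ and spatial Jacobian $O\in O(3)$ preserves $g\vert_p$, because the temporal axis is fixed and the spatial Euclidean form is $O(3)$-invariant; and by the argument of part \ref{cc:BB} this $O$ can simultaneously be chosen so that the transformed $\cA$ is diagonal at $p$. The only delicate step is the coefficient bookkeeping in part \ref{cc:AA}, where the precise pairing $\cC=\cD^T$ (rather than $\cC=\cD$) emerges from the placement of upper versus lower indices in \eqref{eq:kappaLocal_Ix}--\eqref{eq:kappaLocal_IIx}; the remaining steps are routine applications of the transformation rules and the spectral theorem.
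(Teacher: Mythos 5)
Your plan is essentially correct, and for parts \ref{cc:BB} and \ref{cc:CC} it is the same argument the paper uses: start from coordinates in which $g\vert_p=k\operatorname{diag}(-1,1,1,1)$, then apply a purely spatial linear change $\widetilde x^0=x^0$, $\widetilde x^i=P^i{}_j x^j$ with $P$ orthogonal to diagonalise the symmetric matrix $\cA\vert_p$ (using the transformation rule \eqref{eq:transRuleA}), noting that such a change fixes the Minkowski form at $p$. The paper in addition normalises $\det P=1$, which is not needed for the stated conclusion but keeps the change orientation-preserving; and it derives \ref{cc:BB} from \ref{cc:CC} (since a Lorentz metric can always be introduced locally) whereas you prove \ref{cc:BB} directly -- both are fine. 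The one genuine difference is part \ref{cc:AA}: the paper simply cites Obukhov--Hehl (Equation D.1.100), while you reprove it by decomposing $u=e\wedge dx^0+b$, $v=e'\wedge dx^0+b'$, reducing the skewon-free condition $u\wedge\kappa(v)=\kappa(u)\wedge v$ to the spatial identity $e\wedge d'-b\wedge h'=d\wedge e'-h\wedge b'$, and sorting the $ee'$, $bb'$, $eb'$, $be'$ monomials to extract $\cA=\cA^T$, $\cB=\cB^T$, $\cC=\cD^T$. This self-contained derivation is a nice supplement to the paper's citation; the bookkeeping you flag (that $eb'$ and $be'$ each yield the same relation $\cC=\cD^T$) is exactly right and worth spelling out if you write it up.
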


\begin{proof} 
Part \ref{cc:AA} follows by \cite[Equation
D.1.100]{Obu:2003}. 
Since we can always introduce a Lorentz metric in local coordinates for $N$, part 
 \ref{cc:BB} will follow from part \ref{cc:CC}.
 For part \ref{cc:CC}, let $x^i$ be coordinates around $p$ such that
 $g\vert_p = k\operatorname{diag}(-1,1,1,1)$ for $k\in \{\pm 1\}$. By
 \ref{cc:AA}, matrix $\cA\vert_p$ is symmetric, so we can find an
 orthogonal $3\times 3$ matrix $P=(P^i\,\,_j)_{ij}$ such that $P \cA
 P^T$ is diagonal and $\det P=1$. A suitable coordinate system is
 given by $\widetilde x^0 = x^0$ and $\widetilde x^i =\sum_{j=1}^3
 P^i\,_j x^j.$
\proofread{\textbf{Theorem:} If $A$ is a symmetric square matrix, then there exists a 
orthogonal matrix $Q$ such that $A = Q^T D Q = Q^{-1} D Q$  where $D$ is the diagonal matrix with
the eigenvalues of $A$ on the diagonal.
}
\end{proof}

\section{Geometric optics solutions}
\label{sec:GOS}
Let $\kappa\in \Omega^2_2(N)$ on a $4$-manifold $N$, and let $F$ and $G$ be asymptotic sums
\begin{eqnarray}
\label{eq:FGtrial}
   F = \REAL\left\{e^{iP \Phi} \sum_{k=0}^\infty \frac{A_k}{(iP)^k} \right\},\quad
   G = \REAL\left\{e^{iP \Phi} \sum_{k=0}^\infty \frac{B_k}{(iP)^k} \right\},
\end{eqnarray}
where $P>0$ is a constant, $\Phi \in C^\infty(N,\setC)$ and $A_k, B_k \in \Omega^2(N,\setC)$. 
Substituting $F$ and $G$ into the sourceless Maxwell equations and differentiating termwise shows that
$F$ and $G$ form an asymptotic solution provided that
\begin{eqnarray}
\label{eq:As1}
d\Phi \wedge  A_0 &=&0,\\
\label{eq:As2}
d\Phi \wedge  B_0 &=&0,\\
\label{eq:kConst}
B_k &=& \kappa A_k, \\
\label{eq:Trans1}
d\Phi \wedge  A_{k+1} + dA_{k} &=&0,\\
\label{eq:Trans2}
d\Phi \wedge  B_{k+1} + dB_{k} &=&0,
\quad k=0,1,\ldots.
\end{eqnarray}
In equation \eqref{eq:kConst} we treat $\kappa$ as a linear map $\kappa\colon \Omega^2(N,\setC)\to\Omega^2(N,\setC)$.
In equation \eqref{eq:FGtrial} function $\Phi$ is called a \emph{phase function}, and
forms $A_k, B_k$ are called \emph{amplitudes}.
We will assume that $\IMAG \Phi\ge 0$, so
that $F$ and $G$ remain
bounded even if we take $P\to \infty$.  

\begin{lemma}  
\label{lemma:solv}
Suppose $N$ is a smooth manifold, and let $q$ be a $1$-form $q\in
\Omega^1(N,\setC)$ that is nowhere zero.
\begin{enumerate}
\item 
\label{lemmaI}
If $q\wedge A = 0$ for some $A\in \Omega^k(N,\setC)$ where $k\ge 1$,
then there exists a $(k-1)$-form $a\in \Omega^{k-1}(N, \setC)$ such that $A =
q\wedge a$.
\item 
\label{lemmaII}
If $q\wedge a = q\wedge a'$ for some $a,a' \in \Omega^1(N, \setC)$, then $a = a' + f q$
for some $f\in C^\infty(N,\setC)$.
\end{enumerate}
\end{lemma}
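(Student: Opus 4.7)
The plan is to reduce both assertions to a local algebraic statement in an adapted frame, and then to assemble the global form via a partition of unity. Since $q$ is nowhere zero, at every $p\in N$ the covector $q|_p\in \Lambda^1_p(N,\setC)$ is part of a basis of the complexified cotangent space. By continuity, there exist an open neighbourhood $U$ of $p$ and complex-valued $1$-forms $\theta^2,\ldots,\theta^n\in \Omega^1(U,\setC)$ such that $\theta^1:=q,\theta^2,\ldots,\theta^n$ is a pointwise frame for $\Lambda^1(U,\setC)$. Concretely, I would pick complementary covectors at $p$, transport them as \emph{constant} $1$-forms in a local real chart, and shrink $U$ so that the change-of-basis determinant stays non-zero.

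For part \ref{lemmaI}, in such a local frame any $A\in \Omega^k(U,\setC)$ admits a unique decomposition
\begin{equation*}
A = \theta^1\wedge \alpha + \beta,
\end{equation*}
where $\alpha\in \Omega^{k-1}(U,\setC)$ and $\beta\in \Omega^k(U,\setC)$ are expressed using only wedge monomials $\theta^{i_1}\wedge\cdots\wedge\theta^{i_r}$ with $2\le i_1<\cdots<i_r$. The hypothesis $q\wedge A=0$ then becomes $\theta^1\wedge\beta=0$; writing $\beta$ in the basis just described and using the linear independence of $\theta^1\wedge\theta^{i_1}\wedge\cdots\wedge\theta^{i_k}$ forces $\beta=0$. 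Hence $A=q\wedge\alpha$ on $U$.

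To globalize, I would cover $N$ by such open sets $U_\gamma$ with local solutions $\alpha_\gamma$, select a subordinate partition of unity $\{\psi_\gamma\}$, and define $a:=\sum_\gamma \psi_\gamma \alpha_\gamma$ (with each $\psi_\gamma\alpha_\gamma$ extended by zero outside $U_\gamma$). Then
\begin{equation*}
q\wedge a \;=\; \sum_\gamma \psi_\gamma\,(q\wedge\alpha_\gamma) \;=\; \Big(\sum_\gamma \psi_\gamma\Big)A \;=\; A,
\end{equation*}
which establishes global existence. Note that uniqueness of $\alpha_\gamma$ on each chart is not required; the partition of unity absorbs the non-uniqueness described in part \ref{lemmaII}.

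Part \ref{lemmaII} is then immediate: applying part \ref{lemmaI} to the $1$-form $a-a'$, which satisfies $q\wedge(a-a')=0$, produces a $0$-form $f\in C^\infty(N,\setC)$ with $a-a' = q\wedge f = fq$. The only non-routine step in the whole argument is the construction of the adapted complex frame completing a nowhere-zero \emph{complex-valued} smooth $1$-form $q$, which I expect to be the main, though minor, obstacle; everything else is formal linear algebra plus a partition of unity patching.
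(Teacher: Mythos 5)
Your argument is correct, but it takes a different route from the paper. The paper's proof avoids local charts and partitions of unity altogether: it fixes an auxiliary (positive definite) Riemann metric on $N$, writes $q=\alpha+i\beta$ with $\alpha=\operatorname{Re} q$, $\beta=\operatorname{Im} q$, and defines the global complex vector field $X=(\alpha^\sharp-i\beta^\sharp)/(\|\alpha\|^2+\|\beta\|^2)$, which satisfies $q(X)=1$ everywhere. Contracting the hypothesis $q\wedge A=0$ by $X$ and using the graded Leibniz rule for the interior product gives $0=\iota_X(q\wedge A)=A-q\wedge\iota_X A$, so $a:=\iota_X A$ works globally in one step; part~\emph{(ii)} then follows, exactly as in your proposal, by applying part~\emph{(i)} to $a-a'$. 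Both routes are sound. Your local-frame argument is the more elementary, purely linear-algebraic one, and it avoids introducing any auxiliary geometric structure; the price is the chart cover, the verification that a nowhere-zero \emph{complex} $1$-form can be completed to a local $\mathbb{C}$-frame (which you flagged, correctly, as the only non-routine step), and the partition-of-unity patching. The paper's approach buys a single explicit global formula for $a$ at the modest cost of choosing an auxiliary metric, and it bypasses the need to complete $q$ to a coframe. One small remark on your globalisation: it is worth spelling out that $q\wedge(\psi_\gamma\alpha_\gamma)=\psi_\gamma(q\wedge\alpha_\gamma)=\psi_\gamma A$ as forms on all of $N$ (each side supported in $U_\gamma$), which is what makes $\sum_\gamma\psi_\gamma\alpha_\gamma$ a genuine global primitive — you implicitly use this but do not say it.
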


\begin{proof}
  Let $\sharp$ be the isomorphism $T^\ast N\to TN$ induced by an
  auxiliary (positive definite) Riemann metric on $N$, and let $\Vert
  \cdot \Vert$ be the induced norms on $TN$ and $T^\ast N$. Let also $q = \alpha +
  i\beta$, where $\alpha = \REAL q$ and $\beta = \IMAG q$. Then vector
  field $X\in \vfield{N, \setC}$ given by
\begin{eqnarray*}
   X &=&\frac{ \alpha^\sharp -  i \beta^\sharp}{ \Vert \alpha\Vert^2 + \Vert \beta\Vert^2}
\end{eqnarray*}
satisfies $q(X)=1$. Contracting $q\wedge A=0$ by $X$ 
gives part \ref{lemmaI}.
Part \ref{lemmaII} follows by taking $A=a-a'$ in part \ref{lemmaI}.
\end{proof}  

In this work we will only analyse the leading amplitudes $A_0$ and
$B_0$.  However, since $B_0 = \kappa(A_0)$, it suffices to study $A_0$
in more detail.  Let us assume that $\Phi$, $A_0$ and $B_0$ solve
equations \eqref{eq:As1}--\eqref{eq:kConst}. Then Lemma
\ref{lemma:solv} \ref{lemmaI} implies that there exists a $1$-form
$a_0 \in \Omega^1(N,\setC)$ such that $ A_0 = d\Phi\wedge a_0, $
whence
\begin{eqnarray} 
\label{eq:NeqA0xx}
d\Phi \wedge \kappa( d\Phi \wedge a_0) &=& 0. 
\end{eqnarray}

For $N=\setR\times M$ where $M$ is a $3$-manifold and for special
choices for $\kappa$, $\Phi$ and amplitudes $A_k, B_k$, equation
\eqref{eq:FGtrial} define an electromagnetic \emph{Gaussian beam} (see
Section \ref {sec:Intro}). 
In this setting, $\Phi\vert_p$ and $d\Phi\vert_p$ are both
real when $p$ is at a centre of a Gaussian beam.
With the above as motivation we will hereafter only study equation
\eqref{eq:NeqA0xx} at a point $p\in N$ where $d\Phi$ is real.  From
equation \eqref{eq:FGtrial} we then see that $d\Phi\vert_p$
is the direction of most rapid oscillation
(or direction of propagation) for $F$.  Since $A_0 = d\Phi\wedge a_0$, the $1$-form $a_0$, in turn,
determines the polarisation of the solution in equation
\eqref{eq:FGtrial}.  Equation \eqref{eq:NeqA0xx} is thus a condition
that constrains possible polarisations once the direction of
propagation is known.  Since equation \eqref{eq:NeqA0xx} is a linear
in $a_0$, we may study the dimension of the the solution space for
$a_0$.
To do this, let  $\xi\in \Lambda^1_p(N)$ for some $p\in N$ and 
for $\xi$ let $L_\xi$ be the linear map
 $L_\xi \colon \Lambda^1_p(N)\to \Lambda^3_p(N)$,
\begin{eqnarray}
\label{Ldef}
  L_\xi(\alpha) &=& \xi \wedge \kappa( \xi \wedge \alpha), \quad \alpha \in \Lambda_p^{1}(N).
\end{eqnarray}
We have $\xi \in \operatorname{ker} L_\xi$. For all $\xi \in \Lambda^1_p N\slaz$ we can then
find a (non-unique) vector subspace $V_{\xi}\subset \Lambda^1_pN$ such that
\begin{eqnarray}
\label{eq:directSumKer}
  \kerOp L_{\xi} &=& V_{\xi}  \, \oplus \, \operatorname{span} \xi.
\end{eqnarray}

Let $\xi = d\Phi\vert_p$ be nonzero. Then $V_{\xi}\slaz$ parameterises
possible $a_0$ that solve equation \eqref{eq:NeqA0xx} and for which
$A_0 = d\Phi\wedge a_0$ is nonzero.  For a general $\kappa\in
\Omega^2_2(N)$ and $\xi\in \Lambda^1 (N)\slaz$ we can have $\dim
V_{\xi}\in\{0,1,2, 3\}$: Proposition \ref{prop:FresnelForRiemann} will
show that $\dim V_\xi$ can be $0$ or $2$, Example \ref{ex:dimVxi=1}
shows that $\dim V_\xi$ can be $1$, and the next proposition
characterise $\kappa\vert_p$ when $\dim V_\xi=3$ for all $\xi\in
\Lambda_p^1 (N)\slaz$.

\begin{proposition}
Let $\kappa\in \Omega^2_2(N)$ on a $4$-manifold $N$ and let $p\in N$.  
Then the following are equivalent:
\begin{enumerate}
\item $\kappa\vert_p$ is of axion type. 
\label{le:I}
\item $\dim V_\xi=3$ for all $\xi\in \Lambda_p^1(N)\slaz$.
\label{le:II}
\end{enumerate}
\end{proposition}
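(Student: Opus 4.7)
The plan is to prove the equivalence by direct algebraic computation at the single point $p$. Since $\Lambda^1_p(N)$ is $4$-dimensional and $\ker L_\xi \supseteq \operatorname{span}\xi$, the condition $\dim V_\xi=3$ is equivalent to $\ker L_\xi = \Lambda^1_p(N)$, i.e.\ $L_\xi \equiv 0$. The implication \ref{le:I} $\Rightarrow$ \ref{le:II} is then immediate: if $\kappa\vert_p = f\operatorname{Id}$, then for every $\xi,\alpha \in \Lambda^1_p(N)$ one has $L_\xi(\alpha) = f\,\xi\wedge\xi\wedge\alpha = 0$, so $\ker L_\xi = \Lambda^1_p(N)$ and $\dim V_\xi=3$.

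For \ref{le:II} $\Rightarrow$ \ref{le:I} the hypothesis reads
\[ \xi\wedge \kappa(\xi\wedge\alpha) = 0 \quad \text{for all } \xi,\alpha \in \Lambda^1_p(N). \]
I would carry this out in three steps. First, by Lemma \ref{lemma:solv} \ref{lemmaI} applied pointwise, the $2$-form $\kappa(\xi\wedge\alpha)$ is divisible by $\xi$ for every nonzero $\xi$. Fix a basis $\{e_0,\ldots,e_3\}$ of $\Lambda^1_p(N)$; using the observation once with $\xi=e_i$ and once with $\xi=e_j$, the $2$-form $\kappa(e_i\wedge e_j)$ is divisible by both $e_i$ and $e_j$. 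An elementary check shows that a $2$-form divisible by two linearly independent $1$-forms $e_i,e_j$ is a scalar multiple of $e_i\wedge e_j$, so
\[ \kappa(e_i\wedge e_j) = c_{ij}\, e_i\wedge e_j \]
for some scalars $c_{ij}$; in other words, $\kappa\vert_p$ is diagonal in the basis $\{e_i\wedge e_j\}_{i<j}$ of $\Lambda^2_p(N)$.

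Second, I would polarize the hypothesis by replacing $\xi$ with $\xi+\eta$, which after cancelling the two ``pure'' terms yields
\[ \xi\wedge \kappa(\eta\wedge\alpha) + \eta\wedge \kappa(\xi\wedge\alpha) = 0. \]
Specialising to $\xi=e_i$, $\eta=e_j$, $\alpha=e_k$ with pairwise distinct indices and inserting the formula from step one gives $(c_{jk}-c_{ik})\,e_i\wedge e_j\wedge e_k = 0$, so $c_{jk}=c_{ik}$; iterating across pairs of indices shows that all $c_{ij}$ coincide with a common scalar $f$. Third, since $\{e_i\wedge e_j\}_{i<j}$ is a basis of $\Lambda^2_p(N)$, linearity gives $\kappa\vert_p = f\operatorname{Id}$, so $\kappa\vert_p$ lies in $U$ and is of axion type.

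The main obstacle is the coordination between the first two steps: step one shows only that $\kappa\vert_p$ is diagonal in \emph{one} chosen basis of decomposable wedges, which is much weaker than being a multiple of the identity. The polarization identity, which exploits the hypothesis for the ``mixed'' direction $\xi=e_i+e_j$, is precisely what forces the diagonal entries to agree; without it one could not distinguish $\kappa\vert_p$ from a general $\Lambda^2$-endomorphism that happens to be diagonalised by the chosen frame.
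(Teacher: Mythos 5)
Your proof is correct, and it takes a genuinely different route from the paper's.

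Both arguments start from the same observation: since $\ker L_\xi \supseteq \operatorname{span}\xi$ and $\dim\Lambda^1_p(N)=4$, the hypothesis $\dim V_\xi=3$ for all nonzero $\xi$ is equivalent to $L_\xi\equiv 0$, i.e.\ $\xi\wedge\kappa(\xi\wedge\alpha)=0$ for all $\xi,\alpha$. Both arguments also polarise the quadratic dependence on $\xi$. Where they diverge is in how they extract the conclusion $\kappa\vert_p\propto\operatorname{Id}$ from the polarised condition. The paper writes the condition in coordinates as $\xi_i\xi_j\kappa^{ir}_{ab}\varepsilon^{jsab}=0$, differentiates twice in $\xi$, and hands the resulting linear system in the $36$ components of $\kappa$ to computer algebra, which returns $\kappa=\frac16(\operatorname{trace}\kappa)\operatorname{Id}$. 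You instead exploit Lemma~\ref{lemma:solv}~\ref{lemmaI} to first prove the structural statement that $\kappa\vert_p$ is diagonal in the decomposable basis $\{e_i\wedge e_j\}_{i<j}$ of $\Lambda^2_p(N)$ (because $\kappa(e_i\wedge e_j)$ must be divisible by both $e_i$ and $e_j$, and a $2$-form divisible by two independent $1$-forms is proportional to their wedge). Only then do you invoke polarisation, and by that stage the polarised identity collapses to the scalar relations $c_{jk}=c_{ik}$, which one verifies by inspection force all diagonal entries to coincide. Your route buys a fully hand-checkable proof with no computer-algebra step, and it makes visible \emph{why} the conclusion holds: the divisibility constraint forces decomposable diagonalisation, and the cross terms in the polarisation force the eigenvalues to agree. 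The paper's version is shorter to state but opaque without the machine computation.

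One small remark worth making explicit if you write this up: after obtaining $c_{jk}=c_{ik}$ for all pairwise distinct $i,j,k\in\{0,1,2,3\}$, you should note that $c_{ij}=c_{ji}$ (immediate from antisymmetry of the wedge) and then run through a short chain of index choices (e.g.\ $(i,j,k)=(0,1,2),(0,2,1),(0,1,3),(0,3,1),(0,2,3)$) to see that all six $c_{ij}$ collapse to a single scalar; the phrase ``iterating across pairs of indices'' is doing real work here and deserves one sentence of detail.
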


\begin{proof}
  Implication \ref{le:I} $\Rightarrow$ \ref{le:II} is clear. For the
  converse direction suppose that \ref{le:II} holds and $x^i$ are
  local coordinates around $p$. It follows that
\begin{eqnarray*}
     \zeta\wedge \xi\wedge \kappa(\xi\wedge \alpha)&=&0, \quad \alpha,\xi,\zeta\in \Lambda^1_p(N).
\end{eqnarray*}
If locally $\xi=\xi_i dx^i\vert_p$ then
$
    \xi_i \xi_j \kappa^{ir}_{ab} \varepsilon^{jsab}=0.
$
Differentiating with respect to ${\xi_c}$ and ${\xi_d}$ gives
\begin{eqnarray*}
    \kappa^{cr}_{ab} \varepsilon^{dsab} + \kappa^{dr}_{ab} \varepsilon^{csab} &=&0.
\end{eqnarray*}
With computer algebra it follows that
$\kappa =\frac 1 6 \operatorname{trace} \kappa\,\, \operatorname{Id}$
and \ref{le:I} follows.
\end{proof} 

\subsection{Fresnel surface}
Let 
$\kappa\in \Omega^2_2(N)$ on a $4$-manifold $N$. 
If $\kappa$ is locally given by equation \eqref{eq:kappaLocal} in coordinates $x^i$, let 
\begin{eqnarray*}
  \cG^{ijkl}_0 &=& \frac 1 {48} 
\kappa^{a_1 a_2}_{b_1 b_2} 
\kappa^{a_3 i}_{b_3 b_4} 
\kappa^{a_4 j}_{b_5 b_6} 
\varepsilon^{b_1 b_2 b_5 k} 
\varepsilon^{b_3 b_4 b_6 l} 
\varepsilon_{a_1 a_2 a_3 a_4}.
\end{eqnarray*}
In overlapping coordinates $\{\widetilde x^i\}$, these coefficients
transform as
\begin{eqnarray}
\label{eq:TRtrans}
  \widetilde \cG_0^{ijkl} &=& \det \left(\pd{x^r}{\widetilde x^s}\right)\, \cG_0^{abcd} \pd{\widetilde x^i}{x^a} \pd{\widetilde x^j}{x^b}\pd{\widetilde x^k}{x^c}\pd{\widetilde x^l}{x^d}.
\end{eqnarray}
Thus components $\cG^{ijkl}_0$ define a tensor density $\cG_0$ on $N$ of
weight $1$. The \emph{Tamm-Rubilar tensor density} \cite{Rubilar2002,
  Obu:2003} is the symmetric part of $\cG_0$ and we denote this tensor density by $\cG$. 
In coordinates,
$\cG^{ijkl} = \cG^{(ijkl)}_0$, where parenthesis indicate that 
indices $ijkl$ are symmetrised with scaling $1/4!$.  Using tensor
density $\cG$, the \emph{Fresnel surface} at a point $p\in N$
is defined as 
\begin{eqnarray}
\label{eq:Fr}
 F_p &=& \{\xi\in \Lambda^1_p(N) : \cG^{ijkl} \xi_i \xi_j \xi_k \xi_l  = 0\}.
\end{eqnarray}
By equation \eqref{eq:TRtrans}, the definition of $F_p$ does not
depend on local coordinates.  Let $F$ be the disjoint union of all
Fresnel surfaces, $F=\coprod_{p\in N} F_p$. To indicate that $F_p$ and
$F$ depend on $\kappa$ we also write $F_p(\kappa)$ and $F(\kappa)$.

If $\xi\in F_p$ then $\lambda \xi\in F_p$ for all $\lambda\in \setR$.  In
particular $0\in F_p$ for each $p\in N$. 
When $\cG\vert_p$ is non-zero, equation \eqref{eq:Fr} shows that $F_p$
is a fourth order surface in $\Lambda^1_p(N)$, so $F_p$ may contain
non-smooth self intersections.

\begin{theorem}
\label{eq:ThmDimTR}
  Suppose $N$ is a $4$-manifold and $\kappa\in \Omega^2_2(N)$.  If
  $\xi\in \Lambda^1_p(N)$ is non-zero, then the following are
  equivalent:
\begin{enumerate}
\item 
\label{thm:TR_i}
$\dim V_\xi\ge 1$ where $V_\xi$ are defined as in equation \eqref{eq:directSumKer}.
\item 
\label{thm:TR_ii}
$\xi$ belongs to the Fresnel surface  $F_p\subset \Lambda^1_p(N)$.
\end{enumerate}
\end{theorem}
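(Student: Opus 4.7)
My plan is to reformulate $\dim V_\xi \ge 1$ as the vanishing of a single polynomial in $\xi$ and then to identify that polynomial with $\cG^{ijkl}\xi_i\xi_j\xi_k\xi_l$.

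Fix local coordinates around $p$ and dualise $\Lambda^3_p(N) \to T_pN$ by $\omega \mapsto \tfrac{1}{3!}\varepsilon^{mijk}\omega_{ijk}$. Under this identification the linear map $L_\xi$ of \eqref{Ldef} becomes an endomorphism of $\setR^4$ represented by the matrix
$$
M^m{}_s(\xi) \;=\; \tfrac12\,\varepsilon^{mkij}\kappa^{rs}_{ij}\,\xi_k\xi_r,
$$
which is quadratic in $\xi$. Two observations are immediate. First, $M_\xi\,\xi = 0$, since $\kappa^{rs}_{ij}$ is antisymmetric in $rs$ while $\xi_r\xi_s$ is symmetric; this simply restates that $\xi \in \kerOp L_\xi$. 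Second, $\xi_m\, M^m{}_s(\xi) = 0$, since $L_\xi(\alpha) \in \xi \wedge \Lambda^2_p(N)$ corresponds under the dualisation to the hyperplane annihilated by $\xi$. Hence $\xi$ is both a right and a left null vector of $M_\xi$, so $\det M_\xi \equiv 0$, and the condition $\dim V_\xi \ge 1$ from \eqref{eq:directSumKer} translates to $\dim \kerOp M_\xi \ge 2$, equivalently $\operatorname{rank} M_\xi \le 2$, equivalently $\operatorname{adj}(M_\xi)=0$.

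Next I would analyse the structure of $\operatorname{adj}(M_\xi)$. Whenever $\operatorname{rank} M_\xi = 3$, both $\kerOp M_\xi$ and $\kerOp M_\xi^T$ are spanned by $\xi$ by the two observations, and the identities $M_\xi\,\operatorname{adj}(M_\xi) = \operatorname{adj}(M_\xi)\,M_\xi = (\det M_\xi) I = 0$ force the rank-one matrix $\operatorname{adj}(M_\xi)$ to be a scalar multiple of $\xi\otimes\xi$. Polynomiality in $\xi$ promotes this to a global identity
$$
\operatorname{adj}(M_\xi)\;=\;\Psi(\xi)\,\xi\otimes\xi,
$$
with $\Psi$ a scalar polynomial homogeneous of degree $4$ in $\xi$ (the entries of the adjugate are homogeneous of degree $6$ in $\xi$, and $\xi\otimes\xi$ carries degree $2$) and of degree $3$ in $\kappa$. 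In particular $\dim V_\xi \ge 1 \Longleftrightarrow \Psi(\xi) = 0$.

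The main obstacle is the final step: identifying $\Psi(\xi)$ with a nonzero scalar multiple of $\cG^{ijkl}\xi_i\xi_j\xi_k\xi_l$. I would expand $\operatorname{adj}(M_\xi)$ by the Leibniz formula and collapse the arising products of Levi--Civita symbols using the contraction identity $\varepsilon^{a_1a_2a_3a_4}\varepsilon_{b_1b_2b_3b_4} = 4!\,\delta^{[a_1}_{b_1}\delta^{a_2}_{b_2}\delta^{a_3}_{b_3}\delta^{a_4]}_{b_4}$. The resulting cubic contractions of $\kappa$ against two $\varepsilon$-symbols match the pattern in the definition of $\cG_0^{ijkl}$; the four $\xi$-factors enforce symmetrisation of $\cG_0^{ijkl}$ into its symmetric part $\cG^{ijkl}$, and the factor $1/48$ in the definition of $\cG_0$ is exactly what is needed to absorb the combinatorial multiplicities produced by the Leibniz expansion. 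Pinning down the overall constant by hand is tedious, so I would check it on one generic $\kappa$ with computer algebra, which is consistent with the paper's stated methodology. Combining this identification with the preceding paragraph gives \ref{thm:TR_i} $\Longleftrightarrow$ \ref{thm:TR_ii}.
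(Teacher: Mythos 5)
Your argument is correct, but it organises the dualisation step differently from the paper, so a comparison is in order. Both proofs dualise $L_\xi$ into a $4\times 4$ matrix and characterise $\dim V_\xi \ge 1$ by a rank drop. The paper, however, first chooses coordinates adapted to $\xi$ (so that $dx^0\vert_p = \xi$) and uses a coordinate Hodge star to put the matrix into block form $\operatorname{diag}(0,Q)$ with $Q$ a $3\times3$ block; the condition then becomes simply $\det Q = 0$, and the identity $\det Q = \cG^{ijkl}\xi_i\xi_j\xi_k\xi_l$ is cited to \cite{Rubilar2002} and \cite[pp.~267--268]{Obu:2003}. You instead stay in arbitrary coordinates and use the observation that $\xi$ is simultaneously a left and right null vector to factor the adjugate as $\operatorname{adj}(M_\xi) = \Psi(\xi)\,\xi\otimes\xi$. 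This is a more invariant formulation, and the factorisation is a nice structural fact in its own right; the price is that the step ``polynomiality in $\xi$ promotes this to a global identity'' deserves one more line. Over the fraction field $\setR(\xi_0,\dots,\xi_3)$, if $M_\xi$ has generic rank $3$ then its left and right kernels are spanned by $\xi$, so the entries $N^{ij}:=(\operatorname{adj}M_\xi)^{ij}$ satisfy the polynomial identities $N^{ij}\xi_k\xi_l = N^{kl}\xi_i\xi_j$; taking $(i,j,k,l)=(0,0,1,1)$ and using that $\setR[\xi]$ is a UFD with $\gcd(\xi_0^2,\xi_1^2)=1$ gives $\xi_0^2\mid N^{00}$, so $\Psi:=N^{00}/\xi_0^2$ is a polynomial and $N^{ij}=\Psi\,\xi_i\xi_j$ identically. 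The degenerate case $\operatorname{adj}M_\xi\equiv 0$ (e.g.\ axion-type $\kappa$) is covered by $\Psi\equiv 0$. With that filled in, and the final identification of $\Psi$ with $\cG^{ijkl}\xi_i\xi_j\xi_k\xi_l$ checked by the $\varepsilon$-contraction/computer-algebra route you describe, your proof is complete.
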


\begin{proof}
  Let $\{x^i\}_{i=0}^3$ be coordinates around $p$ such that $dx^0\vert_p=\xi$ and
  let $\ast$ be the Hodge star operator induced by the Euclidean Riemann metric $g_{ij} =\delta_{ij}$
  in these coordinates. Let $P\colon \Lambda_p^1(N)\to
  \Lambda_p^1(N)$ be the map $P=2\ast\circ L_\xi$. Then locally
\begin{eqnarray*}
  P(\alpha) &=& \sum_{j=0}^3 \alpha_i \varepsilon^{0abj} \kappa^{0i}_{ab} dx^j,
\end{eqnarray*}
where $\alpha = \alpha_i dx^i\vert_p$ and $\kappa^{ij}_{ab}$ are defined as in equation 
\eqref{eq:kappaLocal}. It follows that in the
basis $\{dx^i\vert_p\}_{i=0}^3$, the map $P$ is represented by the $4\times 4$ matrix $\operatorname{diag}(0, Q)$,
where $Q$ is the $3\times 3$ matrix  $Q^{ij} = \varepsilon^{0abj} \kappa^{0i}_{ab}$, $i,j\in \{1,2,3\}$.
Now $\dim V_\xi\ge 1$ is equivalent with $\dim \operatorname{ker} P\ge 2$ which 
is equivalent with $\det Q = 0$. 
Writing out $\det Q=0$ using
\begin{eqnarray*}
   \det Q &=& \frac 1 {3!} \varepsilon_{abc} \varepsilon_{ijk} Q^{ai} Q^{bj} Q^{ck}
\end{eqnarray*}
gives $\cG^{ijkl}\xi_i\xi_j\xi_k\xi_l=0$. We omit the proof of the
last step which can be found in \cite{Rubilar2002} and
\cite[p. 267 -- 268]{Obu:2003}.
\proofread{ Let us check this: First define
  $\chi^{abcd} = \frac 1 2 \varepsilon^{abpq} \kappa_{pq}^{cd}$. Then
  $Q^{ij} = 2 \chi^{0j0i}$ and $\chi^{abcd}$ is antisymmetric in $ab$
  and $cd$. By anti-symmetry,
$$
   \varepsilon_{0bcd}  \chi^{0b0r} = 
\frac 1 2 
  \varepsilon_{\alpha \beta cd}  \chi^{\alpha \beta 0r}.
$$
Since $\varepsilon_{abc} = \varepsilon_{0abc}$ it follows that 
\begin{eqnarray*}
    \det Q 
              &=& \varepsilon_{0bcd} \varepsilon_{0rst} \chi^{0b0r} 
\chi^{0c0s} \chi^{0d0t} \\
              &=& \frac 1 2 
\varepsilon_{\alpha \beta cd} \varepsilon_{r s0t} \chi^{\alpha \beta0r} \chi^{0c0s} \chi^{0d0t}\\
              &=& \frac 1 4
\varepsilon_{\alpha \beta cd} \varepsilon_{r s\mu \tau } \chi^{\alpha \beta0r} \chi^{0c0s} \chi^{0d\mu \tau }\\ 
              &=& \frac 1 4
\varepsilon_{\alpha \beta \gamma\delta} \varepsilon_{\rho \sigma\mu \tau } \chi^{\alpha \beta0\rho} \chi^{0\gamma 0\sigma} \chi^{0\delta \mu \tau }\\ 
\end{eqnarray*}
}
\end{proof}

Suppose $g$ is a pseudo-Riemann metric on an orientable $4$-manifold $N$ and 
$\kappa \in \Omega^2_2(N)$. Then $g$ and $\kappa$ define a symmetric 
$4\choose 0$-tensor on $N$ by
\begin{eqnarray}
\label{eq:Tg}
  \cG_{g,\kappa} &=& \frac{1}{\sqrt{\vert \operatorname{det} g\vert}}\,\, \cG^{ijkl} \pd{}{x^i}\otimes \pd{}{x^j}\otimes \pd{}{x^k}\otimes \pd{}{x^l},
\end{eqnarray}
where $\cG^{ijkl}$ are local components of the Tamm-Rubilar tensor
density for $\kappa$, and $x^i$ are coordinates in an oriented
atlas for $N$.

A key property of symmetric $p\choose 0$-tensors is that they are
completely determined by their values on the diagonal
\cite{Mujica:2006, PunziEtAl:2009}.  For symmetric $4\choose
0$-tensors on a $4$-manifold (like $\cG_{g,\kappa}$), the precise
statement is contained in the following polarisation
identity.
 
\begin{proposition}
\label{prop:polarId}
Suppose $L$ is a symmetric $4\choose 0$-tensor on a $4$-manifold $N$. If 
$x_1, x_2, x_3, x_4\in \Lambda^1_p(N)$ then 
\begin{eqnarray*}
L(x_1, x_2, x_3, x_4) &=& \frac {1}{4! 2^4} \sum_{\theta_i\in \{\pm1\}} 
\theta_1 \theta_2 \theta_3 \theta_4 
\, L(\sum_{i=1}^4\theta_i x_i, \ldots,  \sum_{i=1}^4\theta_i x_i).
\end{eqnarray*}
\end{proposition}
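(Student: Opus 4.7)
The statement is purely algebraic: it takes place in the single vector space $\Lambda^1_p(N)$, so the manifold structure is irrelevant and only multilinearity plus the total symmetry of $L$ will be used. The plan is the standard polarisation trick. For each $\theta=(\theta_1,\theta_2,\theta_3,\theta_4)\in\{\pm1\}^4$ set $v(\theta)=\sum_{i=1}^4\theta_i x_i$. Expanding $L(v(\theta),v(\theta),v(\theta),v(\theta))$ by multilinearity and using the symmetry of $L$ gives
\begin{eqnarray*}
L(v(\theta),v(\theta),v(\theta),v(\theta))
&=& \sum_{(i_1,i_2,i_3,i_4)\in\{1,2,3,4\}^4}\theta_{i_1}\theta_{i_2}\theta_{i_3}\theta_{i_4}\,L(x_{i_1},x_{i_2},x_{i_3},x_{i_4}).
\end{eqnarray*}
Multiplying by $\theta_1\theta_2\theta_3\theta_4$ and summing over $\theta\in\{\pm1\}^4$, the coefficient of $L(x_{i_1},x_{i_2},x_{i_3},x_{i_4})$ becomes $\prod_{j=1}^{4}\sum_{\theta_j=\pm1}\theta_j^{1+n_j}$, where $n_j=\#\{k:i_k=j\}$ is the number of times the index $j$ appears among $(i_1,\ldots,i_4)$.

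Now I would invoke the elementary parity identity $\sum_{\theta_j=\pm1}\theta_j^{m}$ equals $2$ when $m$ is even and $0$ when $m$ is odd. The coefficient is therefore non-zero if and only if every exponent $1+n_j$ is even, i.e.\ every $n_j$ is odd. Because $n_1+n_2+n_3+n_4=4$ and each $n_j\ge 0$ is odd, the only possibility is $n_1=n_2=n_3=n_4=1$; in other words, $(i_1,i_2,i_3,i_4)$ must be a permutation of $(1,2,3,4)$. In that case each of the four factors equals $2$, contributing $2^4=16$, and by total symmetry of $L$ each of the $4!$ permutations yields the same value $L(x_1,x_2,x_3,x_4)$.

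Collecting, the full sum equals $4!\cdot 2^4\cdot L(x_1,x_2,x_3,x_4)$, and dividing by $4!\cdot2^4$ delivers the stated identity. There is no real obstacle here: the only subtlety is the combinatorial step establishing that the odd-parity constraint together with $\sum n_j=4$ forces each $n_j=1$, which relies on the numerical match between the degree of $L$ (namely four) and the number of polarisation variables $\theta_1,\ldots,\theta_4$. This is precisely why the identity is tailored to symmetric $4\choose 0$-tensors on a $4$-manifold.
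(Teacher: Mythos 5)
Your proof is correct. The paper itself states Proposition~\ref{prop:polarId} without proof, merely citing \cite{Mujica:2006, PunziEtAl:2009} for the general principle that a symmetric tensor is determined by its diagonal values, so there is no paper-internal argument to compare against. Your proof is the standard polarisation computation: expanding $L(v(\theta),\ldots,v(\theta))$ multilinearly, noting that the coefficient of $L(x_{i_1},\ldots,x_{i_4})$ factors as $\prod_{j=1}^4 \sum_{\theta_j=\pm1}\theta_j^{1+n_j}$, and observing that the parity constraint (each $n_j$ odd) together with $\sum_j n_j = 4$ forces $n_j=1$ for every $j$, so only permutations of $(1,2,3,4)$ survive, each contributing $2^4 L(x_1,x_2,x_3,x_4)$ by the symmetry of $L$. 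The arithmetic $4!\cdot 2^4$ in the denominator is exactly right, and the remark that the argument hinges on the degree of the tensor matching the number of polarisation signs is a useful observation about why the formula takes this precise shape.
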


\subsection{Electromagnetic medium induced by a Hodge star operator}
In Proposition \ref{prop:HodgeHasOnlyPrincipalPart} we saw that a
pseudo-Riemann metric on a $4$-manifold induces a $2\choose 2$-tensor
$\kappa$ with only a principal part.  The next example shows how
standard isotropic electromagnetic medium can be modelled using a
Lorentz metric on $\setR^4$.
 
\begin{example} 
\label{eq:ABCDminkowski}
On $N=\setR\times \setR^3$ let $\kappa$ be the $2\choose 2$-tensor determined 
$3\times 3$ matrices
$$
\cA = -\epsilon \operatorname{Id}, \quad
\cB = \mu^{-1}\operatorname{Id}, \quad
\cC =\cD = 0,
$$
where $\epsilon,\mu\colon \setR^3\to (0,\infty)$.
Then constitutive equations \eqref{eq:kappaLocal_Ix}--\eqref{eq:kappaLocal_IIx} are 
equivalent with the isotropic constitutive equations
\begin{eqnarray}
\label{eq:isotropicEq1}
D &=& \epsilon \ast_0 E, \\
\label{eq:isotropicEq2}
B &=&  \mu \ast_0 H,
\end{eqnarray}
where $\epsilon$ is the \emph{permittivity} and $\mu$ is the
\emph{permeability} of the medium and $\ast_0$ is the Hodge star operator
induced by the Euclidean metric on $\setR^3$.
If $\kappa$ is the $2\choose 2$-tensor defined as 
$\kappa = \sqrt{\frac{\epsilon}{\mu}} \ast_g$ where $g$ is the Lorentz metric 
$
  g =  \operatorname{diag}(-\frac{1}{\epsilon \mu},1,1,1)
$, then equations \eqref{eq:isotropicEq1}--\eqref{eq:isotropicEq2} are equivalent
with equation \eqref{FGchi}.
\proofBox
\end{example}

The next proposition shows that if $g$ is a pseudo-Riemann metric with
signature $(++++)$ or $(----)$ then the medium with $\kappa =
\ast_g$ has no asymptotic solutions. That is, if $d\Phi\vert_p$ is
non-zero, then equation \eqref{eq:NeqA0xx} implies that
$A_0\vert_p=0$. The proposition also shows that if $\kappa = \ast_g$
for an indefinite metric $g$, then $A_0$ can be non-zero only when
$d\Phi\vert_p$ is a \emph{null covector}, that is, when
$g(d\Phi\vert_p,d\Phi\vert_p)=0$.

Let $\operatorname{sgn}\colon \setR\to \{-1,+1\}$ be the \emph{sign function},
$\operatorname{sgn} x= -1$ for $x<0$, 
$\operatorname{sgn} x= 0$ for $x=0$ and 
$\operatorname{sgn} x= 1$ for $x>0$.

\begin{proposition} 
\label{eq:ThgExp}
Let $g$ and $h$ be pseudo-Riemann metrics on $N$ on an orientable
$4$-manifold $N$. Then
\begin{eqnarray*}
  \cG_{h,\ast g}(\xi,\xi,\xi,\xi) &=& \operatorname{sgn}(\det g)\,  \sqrt{\frac{\vert \det g\vert }{\vert \det h\vert} }\left( g(\xi,\xi)\right)^2, \quad \xi\in \Lambda^1( N).
\end{eqnarray*}
Thus the Fresnel surface induced by the $2\choose 2$-tensor $\ast_g$ is given by
\begin{eqnarray*}
\label{eq:FresIsNullCone}
  F(\ast_g) &=& \{ \xi\in \Lambda^1(N) : g(\xi,\xi)=0\}.
\end{eqnarray*}
\end{proposition}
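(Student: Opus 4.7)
The plan is to verify the identity pointwise in coordinates adapted to $g$, using that both sides of the claimed equation transform consistently: the left side is the evaluation of a genuine $(4,0)$-tensor $\cG_{h,\ast g}$ on $\xi$, and on the right, $\operatorname{sgn}(\det g)$ is coordinate-invariant while $\sqrt{|\det g|/|\det h|}$ combines with the definition of $\cG_{h,\ast g}$ as $\cG_0/\sqrt{|\det h|}$. So fix $p \in N$ and choose an oriented coordinate system at $p$ in which $g_{ij}|_p = \operatorname{diag}(\epsilon_0,\epsilon_1,\epsilon_2,\epsilon_3)$ with $\epsilon_a\in\{\pm 1\}$. Then $\sqrt{|\det g|}=1$, $\sigma := \operatorname{sgn}(\det g) = \epsilon_0\epsilon_1\epsilon_2\epsilon_3$, and the formula \eqref{eq:hodgeKappaLocal} for the Hodge-star components collapses to $\kappa^{ij}_{rs}|_p = \epsilon_i\epsilon_j\,\varepsilon_{ijrs}$ (no sum on $i,j$).

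Next I substitute this expression for $\kappa$ into the definition of $\cG_0^{ijkl}$ and factor out the $\epsilon$'s. The three $\kappa$ factors produce $\epsilon_{a_1}\epsilon_{a_2}\epsilon_{a_3}\epsilon_{a_4}\epsilon_i\epsilon_j$; because the trailing $\varepsilon_{a_1 a_2 a_3 a_4}$ is nonzero only when $(a_1,\ldots,a_4)$ is a permutation of $(0,1,2,3)$, the first four $\epsilon$'s contribute a constant $\sigma$ on the support of the sum. Thus $\cG_0^{ijkl}|_p = \sigma\,\epsilon_i\epsilon_j\,M^{ijkl}$, where
\begin{eqnarray*}
M^{ijkl} &=& \tfrac{1}{48}\,\varepsilon_{a_1a_2b_1b_2}\,\varepsilon_{a_3 i b_3 b_4}\,\varepsilon_{a_4 j b_5 b_6}\,\varepsilon^{b_1 b_2 b_5 k}\,\varepsilon^{b_3 b_4 b_6 l}\,\varepsilon_{a_1 a_2 a_3 a_4}
\end{eqnarray*}
is a purely combinatorial object depending only on the free indices $i,j,k,l$.

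The core of the proof is the reduction of $M^{ijkl}$ to generalised Kronecker deltas. I would do this by successive $\varepsilon$--$\varepsilon$ contractions: the pair $\varepsilon_{a_1 a_2 b_1 b_2}\varepsilon_{a_1 a_2 a_3 a_4}$ summed over $a_1,a_2$ gives a two-term $\delta\delta$; the two surviving $a$-indices are then consumed by the remaining $\varepsilon^{\ldots}$'s, and the $b$-indices by the corresponding upper $\varepsilon$'s, via the standard identity $\varepsilon^{i_1\cdots i_p a_{p+1}\cdots a_4}\varepsilon_{j_1\cdots j_p a_{p+1}\cdots a_4} = (4-p)!\,\delta^{i_1\cdots i_p}_{j_1\cdots j_p}$. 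After symmetrisation in $ijkl$ and contraction with $\xi_i\xi_j\xi_k\xi_l$, the resulting polynomial in $\xi$ factors as a symmetric square: only terms of the shape $\delta^{(ij}\delta^{kl)}$ survive, producing $\epsilon_i\epsilon_j\,\delta^{ij}\xi_i\xi_j$ squared. Since $g(\xi,\xi)|_p = \sum_a \epsilon_a \xi_a^2$, this is precisely $(g(\xi,\xi))^2$. Combining, $\cG_0^{ijkl}|_p\,\xi_i\xi_j\xi_k\xi_l = \sigma\,(g(\xi,\xi))^2$, and dividing by $\sqrt{|\det h|}$ as in \eqref{eq:Tg} yields the claimed identity. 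The Fresnel-surface formula \eqref{eq:FresIsNullCone} is then immediate since $\cG_{h,\ast g}(\xi,\xi,\xi,\xi)=0$ iff $(g(\xi,\xi))^2=0$ iff $g(\xi,\xi)=0$.

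The main obstacle is the combinatorial reduction of $M^{ijkl}$: although each individual $\varepsilon$--$\varepsilon$ contraction is routine, chaining five of them produces many $\delta$-monomials whose symmetrisation must be tracked carefully. A shortcut, in case the direct calculation becomes unwieldy, is to invoke Proposition \ref{prop:polarId}: it suffices to check the diagonal identity $\cG_{h,\ast g}(\xi,\xi,\xi,\xi)=\sigma\sqrt{|\det g|/|\det h|}\,(g(\xi,\xi))^2$ for a generic $\xi$, and then to verify the sign and scalar factor by specialising to a single test covector (e.g.\ $\xi = dx^0|_p$), using the determinantal formulation $\cG^{ijkl}\xi_i\xi_j\xi_k\xi_l = \det Q$ from the proof of Theorem \ref{eq:ThmDimTR}, which in the diagonal frame reduces to an immediate $3\times 3$ determinant.
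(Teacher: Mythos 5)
Your structural setup is sound and essentially matches what the paper must do under the hood: fix a point, choose coordinates making $g\vert_p$ diagonal with $\pm 1$ entries, observe that then $\sqrt{|\det g|}=1$ and $\kappa^{ij}_{rs}=\epsilon_i\epsilon_j\,\varepsilon_{ijrs}$ (no sum on $i,j$), and factor $\epsilon_{a_1}\epsilon_{a_2}\epsilon_{a_3}\epsilon_{a_4}=\operatorname{sgn}(\det g)$ out through the trailing $\varepsilon_{a_1a_2a_3a_4}$. This correctly reduces the proposition to the purely combinatorial identity $\sum \epsilon_i\epsilon_j M^{ijkl}\xi_i\xi_j\xi_k\xi_l = \bigl(\sum_a\epsilon_a\xi_a^2\bigr)^2$. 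The paper's own proof simply reports the output of a computer-algebra computation, so what you are attempting is exactly the content that the paper delegates to the machine — and that is also where your proposal stops being a proof.

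The gap is that you never carry out the reduction of $M^{ijkl}$, and the one-line description of its outcome is wrong. You assert that the surviving terms are of shape $\delta^{(ij}\delta^{kl)}$ and ``produce $\epsilon_i\epsilon_j\,\delta^{ij}\xi_i\xi_j$ squared.'' But $\sum_{i,j}\epsilon_i\epsilon_j\,\delta^{ij}\xi_i\xi_j = \sum_i\epsilon_i^2\xi_i^2 = \sum_i\xi_i^2$ is the Euclidean norm, not $g(\xi,\xi)=\sum_i\epsilon_i\xi_i^2$; that term squared gives $|\xi|^4$, not $(g(\xi,\xi))^2$. Moreover the distinction between the three Kronecker pairings is essential here precisely because the prefactor $\epsilon_i\epsilon_j$ sits only on the first two slots: $\epsilon_i\epsilon_j\delta^{ik}\delta^{jl}\xi_i\xi_j\xi_k\xi_l = (g(\xi,\xi))^2$, while $\epsilon_i\epsilon_j\delta^{ij}\delta^{kl}\xi_i\xi_j\xi_k\xi_l=|\xi|^4$. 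So the claim that needs proving is that after the five $\varepsilon$--$\varepsilon$ contractions the $\delta^{ij}\delta^{kl}$-type terms cancel and only the $\delta^{ik}\delta^{jl}$-type (and its mirror) survive — a nontrivial combinatorial fact that the sketch neither proves nor even states correctly. Finally, the proposed shortcut via Proposition~\ref{prop:polarId} and a single test covector is not logically sufficient: polarisation reduces comparison of symmetric tensors to comparison of their diagonal values for \emph{all} $\xi$, not to a single evaluation; and the determinantal formula $\cG^{ijkl}\xi_i\xi_j\xi_k\xi_l=\det Q$ from Theorem~\ref{eq:ThmDimTR} is tied to coordinates with $dx^0\vert_p=\xi$, so evaluating it for one $\xi$ cannot by itself pin down the full degree-4 polynomial. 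To complete this route you would need either to execute the $\varepsilon$-contraction chain to the end (tracking signs and multiplicities), or to accept the computer-algebra verification as the paper does.
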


\begin{proof} 
Let $\cG^{ijkl}$ be components for the Tamm-Rubilar tensor density for $\ast_g$.
Computer algebra then gives
\begin{eqnarray*}
  \cG^{abcd} \xi_a \xi_b \xi_c \xi_d &=& \operatorname{sgn}(\det g)\, 
\sqrt{\vert \det g\vert} \left( g(\xi,\xi)\right)^2,
\end{eqnarray*}
where $\xi=\xi_a dx^a$ and the claim follows by equation \eqref{eq:Tg}.
\end{proof}

We know that a general plane wave in homogeneous isotropic medium in
$\setR^3$ can be written as a sum of two circularly polarised plane
waves with opposite handedness. 
The \emph{Bohren decomposition} generalise this classical result to
electromagnetic fields in homogeneous isotropic chiral medium
\cite{LiSiTrVi:1994}.  The \emph{Moses decomposition}, or
\emph{helicity decomposition}, further generalise this decomposition
to arbitrary vector fields on $\setR^3$. For a decomposition of
Maxwell's equations using this last decomposition, see
\cite{Moses:1971, Dahl2004}.  In all of these cases, 
an electromagnetic wave can be polarised in two
different ways. Part \ref{prop:FrLoI} in the next proposition shows
that this is also the case for asymptotic solutions as defined above
when the medium is given by the Hodge star operator of a indefinite metric.

\begin{proposition} 
\label{prop:FresnelForRiemann}
Let $N$ be an orientable $4$-dimensional manifold, and let $\kappa \in \Omega^2_2(N)$
the $2\choose 2$-tensor $\kappa =\ast_g$ induced by a pseudo-Riemann metric $g$ on $N$. 
\begin{enumerate}
\item 
\label{prop:FrLoI}
If $\xi \in \Lambda^1(N)$ is non-zero, and $V_\xi$ is as in equation \eqref{eq:directSumKer}, then
\begin{eqnarray*}
\dim V_{\xi} &=& \begin{cases} 2, &\mbox{when} \, \, \xi\in F(\kappa), \\ 
   0, &\mbox{when} \, \, \xi\notin F(\kappa).
\end{cases}
\end{eqnarray*}
\item 
\label{prop:FrLoII}
If $\xi \in F(\kappa)$ is non-zero, and $L_\xi$ is as in equation \eqref{Ldef} then 
\begin{eqnarray*}
   \operatorname{ker}L_\xi &=& \xi^\perp,
\end{eqnarray*}
where $\xi^\perp = \{ \alpha\in \Lambda^1(N) : g(\alpha, \xi)=0\}$. Thus, 
for any choice of $V_\xi$ in equation \eqref{eq:directSumKer} we have $V_\xi\subset \xi^\perp$.
\end{enumerate}
\end{proposition}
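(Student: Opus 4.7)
The plan is to reduce the vanishing of $L_\xi(\alpha) = \xi \wedge \ast_g(\xi\wedge \alpha)\in \Lambda^3_p(N)$ to a condition involving only $g$ by pairing with an arbitrary $\eta\in \Lambda^1_p(N)$ through wedge product, and then to exploit non-degeneracy twice: first the non-degeneracy of the pairing $\Lambda^1_p\times \Lambda^3_p\to \Lambda^4_p$, then the non-degeneracy of $g$ on $\Lambda^1_p$.

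Concretely, I would begin by recalling the standard identity that for any two $2$-forms $\omega,\eta$ on an oriented pseudo-Riemann manifold one has $\omega\wedge \ast_g\eta=\langle \omega,\eta\rangle_g\,\mathrm{vol}_g$, together with the Gram-determinant formula $\langle u_1\wedge u_2,v_1\wedge v_2\rangle_g=g(u_1,v_1)g(u_2,v_2)-g(u_1,v_2)g(u_2,v_1)$ for decomposable $2$-forms. Applied to $\omega=\eta\wedge\xi$ and $\ast_g$-argument $\xi\wedge\alpha$, this gives the key identity
\begin{eqnarray*}
\eta\wedge L_\xi(\alpha) \;=\; \bigl(g(\eta,\xi)\,g(\xi,\alpha)-g(\eta,\alpha)\,g(\xi,\xi)\bigr)\,\mathrm{vol}_g
\end{eqnarray*}
for all $\eta,\alpha\in \Lambda^1_p(N)$. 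Since $\Lambda^1_p\times \Lambda^3_p\to \Lambda^4_p$ via wedge is a perfect pairing, $L_\xi(\alpha)=0$ is equivalent to the vanishing of this scalar for every $\eta$, which by non-degeneracy of $g$ is in turn equivalent to the identity
\begin{eqnarray*}
 g(\xi,\alpha)\,\xi \;-\; g(\xi,\xi)\,\alpha \;=\;0 \quad\text{in}\quad \Lambda^1_p(N).
\end{eqnarray*}

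With this equivalence in hand both parts follow by splitting on whether $\xi\in F(\kappa)$. If $\xi\in F(\kappa)$, Proposition \ref{eq:ThgExp} yields $g(\xi,\xi)=0$, so the condition becomes $g(\xi,\alpha)\xi=0$; since $\xi\neq 0$ this is exactly $\alpha\in\xi^\perp$, proving (ii) and showing $\dim\ker L_\xi=\dim\xi^\perp=3$. Because $\xi\in\xi^\perp$ we have $\operatorname{span}\xi\subset \ker L_\xi$, so the complement $V_\xi$ in equation \eqref{eq:directSumKer} satisfies $\dim V_\xi=2$, and $V_\xi\subset \ker L_\xi=\xi^\perp$ as claimed. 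If $\xi\notin F(\kappa)$, then $g(\xi,\xi)\neq 0$ and the displayed condition forces $\alpha=(g(\xi,\alpha)/g(\xi,\xi))\,\xi\in \operatorname{span}\xi$, hence $\ker L_\xi=\operatorname{span}\xi$ and $\dim V_\xi=0$.

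The only real care needed is in the first step: the identity $\omega\wedge\ast_g\eta=\langle\omega,\eta\rangle_g\,\mathrm{vol}_g$ in the pseudo-Riemannian setting may carry an overall sign depending on signature conventions, but since the argument only uses the vanishing (not the sign) of $\eta\wedge L_\xi(\alpha)$, any such global factor is immaterial. Thus no separate treatment of signatures $(+---)$, $(++++)$, etc.\ is required, and no computer algebra is needed beyond the polarisation of the Gram determinant.
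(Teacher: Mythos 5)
Your proof is correct and follows a genuinely different, cleaner route than the paper's. The paper works in coordinates where $g\vert_p$ is diagonal, writes $L_\xi(\alpha)$ in terms of a $4\times4$ matrix $H^{ir}=g(\xi,\xi)g^{ir}-\xi_a g^{ai}\xi_b g^{br}$, and then uses computer algebra to compute the spectrum of $H$, reading off $\dim\ker H$ from the multiplicity of the zero eigenvalue in the two cases $g(\xi,\xi)=0$ and $g(\xi,\xi)\neq 0$. You bypass the coordinate and eigenvalue computation entirely: combining the pairing identity $\omega\wedge\ast_g\mu=\langle\omega,\mu\rangle_g\,\mathrm{vol}_g$ with the Gram-determinant expansion for decomposable $2$-forms and two applications of non-degeneracy (wedge pairing $\Lambda^1_p\times\Lambda^3_p\to\Lambda^4_p$, then $g$) yields the coordinate-free characterization $\ker L_\xi=\{\alpha:\,g(\xi,\alpha)\,\xi=g(\xi,\xi)\,\alpha\}$, from which both parts drop out by inspection. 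Note that this condition is literally the statement $H\alpha^\sharp=0$ that the paper arrives at in coordinates, so the two routes converge on the same intermediate object; what you gain is that no computer algebra or eigenvalue analysis is needed, the signature plays no role beyond non-degeneracy of $g$, and the key identity makes the link to the null cone of $g$ transparent. Your remark about the potential overall sign in $\omega\wedge\ast_g\mu=\pm\langle\omega,\mu\rangle_g\,\mathrm{vol}_g$ being irrelevant to the kernel computation is exactly the right point of care.
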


\begin{proof} 
Let $p$ be the basepoint of $\xi$ and let
  $\{x^i\}_{i=0}^3$ are local coordinates for $N$ around $p$ such that
  $g=g_{ij} dx^i\otimes dx^j$ and $g_{ij}\vert_p$ is diagonal with
  entries $\pm 1$.  We know that $\kappa^2 = \ast_g^2 = (-1)^\sigma
  \operatorname{Id}$, where $\sigma$ is the
  \emph{index} of $g$ \cite[p. 412] {AbrahamMarsdenRatiu:1988}.  If
  $\alpha \in \Lambda^1_p(N)$, equations \eqref{Ldef} and
  \eqref{eq:hodgeKappaLocal} imply that
\begin{eqnarray}
\nonumber
L_\xi(\alpha)&=& \frac 1 2 \xi_r \xi_s \alpha_i g^{r a} g^{i b} \varepsilon_{abcd} dx^s \wedge dx^{c}\wedge dx^{d} \\
\nonumber
&=& \det g \, (-1)^{\sigma} \alpha_i H^{ir} g_{rs} \ast dx^s, 
\label{eq:transformToLocal}
\end{eqnarray}
where $\xi = \xi_i dx^i\vert_p$ and $\alpha = \alpha_i dx^i\vert_p$ and 
\begin{eqnarray}
H^{ir} &=& g(\xi,\xi) g^{ir} -  \xi_a g^{ai}\xi_b g^{br}.
\label{eq:Cik}
\end{eqnarray}
For part \ref{prop:FrLoI},  equations \eqref{eq:transformToLocal} and \eqref{eq:directSumKer}
imply that $\dim V_\xi = \dim \kerOp H-1$ where $H$ is the $4\times 4$
matrix with entries $H^{ij}$.  Let $\sigma(H)$ denote the spectrum of
$H$ with eigenvalues repeated according to their algebraic
multiplicity.  With computer algebra we find that
\begin{eqnarray*}
\sigma(H) &=& \left(0, C_1 g(\xi,\xi), C_2 g(\xi,\xi), C_3 \sum_{i=0}^3 \xi_i^2\right),
\end{eqnarray*}
where $C_i\in \{\pm 1\}$ are constants that depend only the signature
of $g$.  Now part \ref{prop:FrLoI} follows by Proposition
\ref{eq:ThgExp} and since algebraic and geometric multiplicity of an
eigenvalue coincide for symmetric matrices \cite[p. 260]{Szabo:2002}.
For part \ref{prop:FrLoII}, equality $\operatorname{ker}
L_\xi=\xi^\perp$ follows from the local representation of $L_\xi$ in
equation \eqref{eq:Cik}.
\end{proof}

The next example shows that the case
$\operatorname{dim} V_\xi = 1$ is possible in equation \eqref{eq:directSumKer}.
The medium defined by equations \eqref{eq:biaxial}
is called 
a \emph{biaxial crystal} \cite[Section 15.3.3]{BornWolf}.

\begin{example}
\label{ex:dimVxi=1} 
On $M=\setR\times \setR^3$, let $\kappa\in \Omega^2_2(M)$ be defined by
\begin{eqnarray}
\label{eq:biaxial}
  \cA = -\operatorname{diag}(1,2,3), \quad
  \cB = \operatorname{Id}, \quad
  \cC=\cD=0.
\end{eqnarray} 
Then the Fresnel equation reads
\begin{eqnarray}
\label{eq:FresnelEqXXY}
6 \xi_0^4 
- \xi_0^2 (5 \xi_1^2 + 8 \xi_2^2 + 9 \xi_3^2)
+ (\xi_1^2 + \xi_2^2 + \xi_3^2) (\xi_1^2 + 2 \xi_2^2 + 3 \xi_3^2)  &=& 0.
\end{eqnarray}
Let $S$ be the solution set in $\setR^3$ to the above equation when
$\xi_0=1$. By equation \eqref{eq:FresnelEqXXY}, it is clear that $S$
is mirror symmetric about the $\xi_1\xi_2$, $\xi_1\xi_3$ and
$\xi_2\xi_3$ coordinate planes. Figure \ref{fig:singular} below
illustrates $S$ in the quadrant $\xi_1\ge 0, \xi_2\ge 0, \xi_3\ge 0$,
and in this quadrant we see that $S$ has one singular point
$\xi_{\operatorname{sing}}\in S$. 
\begin{center}
\begin{figure}[!ht]
\includegraphics[width= 0.65\textwidth]{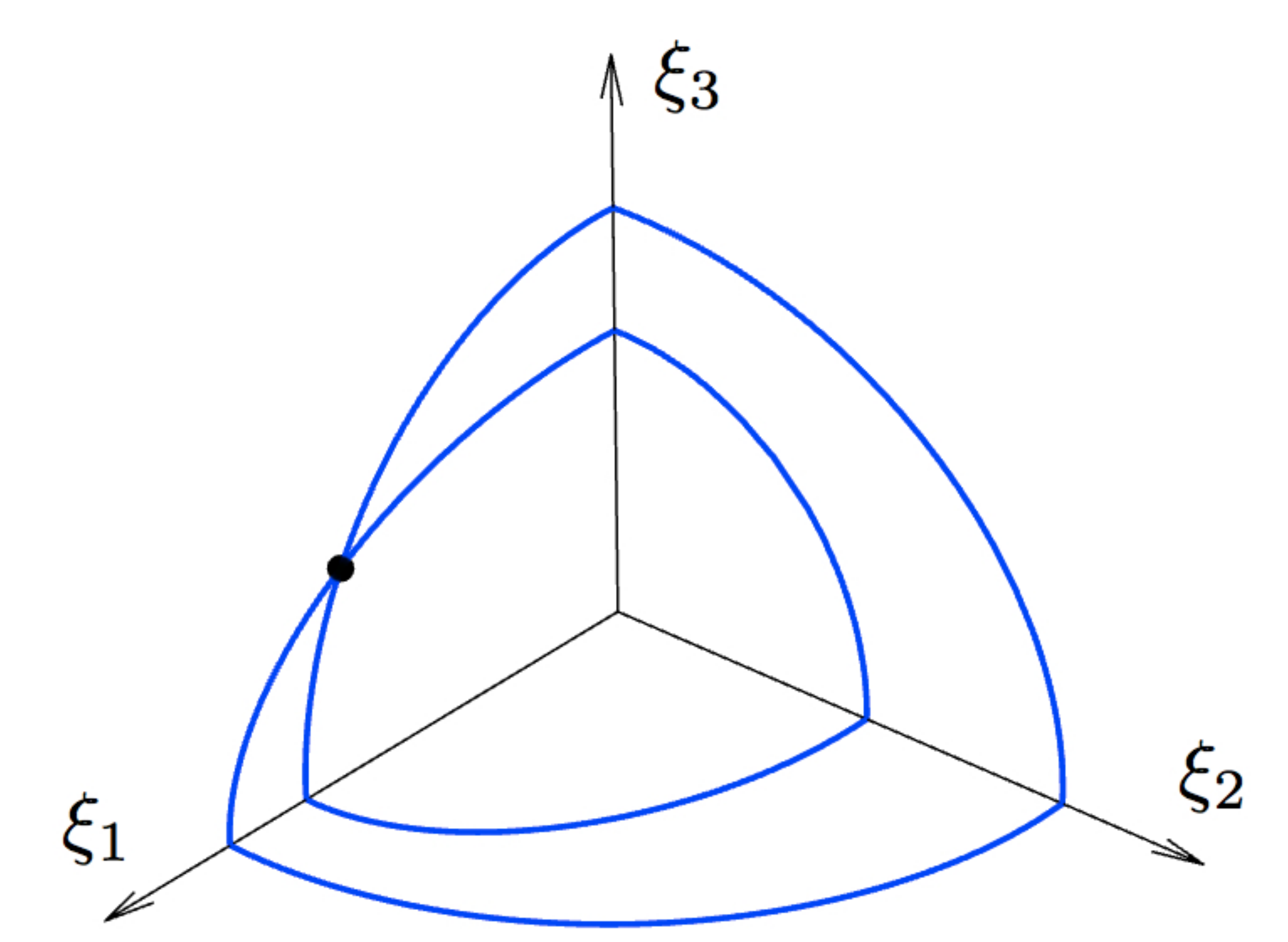}
\caption{One quadrant in $\setR^3$ of a Fresnel surface with a singular point  illustrated by a dot.}
\label{fig:singular}
\end{figure} 
\end{center}
Surface $S$ is defined implicitly by $f(\xi_1, \xi_2,\xi_3)=0$ and
singular points are characterised by $\nabla f=0$. This yields
$\xi_{\operatorname{sing}} = (
\sqrt{\frac{3}{2}},0,\frac{1}{\sqrt{2}})$. (For an alternative way to
solve this point, see \cite[Lemma 4.2 \emph{(iii)}]{Dahl2004}.)  Using
computer algebra and the arguments used to prove Theorem
\ref{eq:ThmDimTR} we may compute $\dim V_\xi$ when $\xi_0=1$ and $S$
intersects one of the coordinate planes $\{\xi_i=0\}_{i=1}^3$.  In
these intersections we obtain $\dim V_\xi=1$ except at the singular
point $\xi_{\operatorname{sing}}$ where $\dim V_\xi=2$.  \proofBox
\end{example}

\section{Determining the medium from the Fresnel surface}
\label{sec:Closure}

As described in the introduction, the new proof of implication
\ref{coIII} $\Rightarrow$ \ref{coII} in the next theorem is the first
main result of this paper.
Regarding the other implications let us make a few remarks.
Implication \ref{coII} $\Rightarrow$ \ref{coI} is a standard result
for the Hodge star operator on a $4$-manifold. The converse implication
\ref{coI} $\Rightarrow$ \ref{coII} is less well known. The result was
first derived by Sch\"onberg \cite{Shoenberg:1971}.  For further derivations and
discussions, see \cite{Jadczyk:1979, Rubilar2002, Obu:2003}.
Below we will give yet another proof using computer algebra. The proof
follows \cite{Obu:2003} and we use a Sch\"onberg-Urbantke-like formula
(see equations \eqref{eq:definitionOfGlobalG}--\eqref{eq:SU}) to
define a metric $g$ from $\kappa$.  However, the below argument that
$g$ transforms as a tensor seems to be new. For a different argument,
see \cite[Section D.5.4]{Obu:2003}.

When a general $2\choose 2$-tensor $\kappa$ on a $4$-manifold satisfies
$\kappa^2 = -f \operatorname{Id}$ as in condition \ref{coI} one says that
$\kappa$ satisfies the \emph{closure condition}. For physical
motivation, see \cite[Section D.3.1]{Obu:2003}. 
Let us emphasise that Theorem \ref{thm:mainResult} is a global result.
The result gives criteria for the existence of a Lorentz metric on a
$4$-manifold.  In general, we know that a connected manifold $M$ has a
Lorentz metric if and only if $M$ is non-compact, or if $M$ is compact
and the Euler number $\chi(N)$ is zero \cite[Theorem
2.4]{MingSan:2008}.  Let us also note that if $J$ is an \emph{almost
  complex structure} on a manifold $M$, that is, $J$ is a $1\choose
1$-tensor on $M$ with $J^2=-\operatorname{Id}$ and $\dim M\ge 2$, then
$M$ is orientable \cite[p. 77]{Hsiung:1995}. It does not seem to be
known if the analogous result also holds for $2\choose 2$-tensors, that is,
if the closure condition on a $4$-manifold implies orientability.

\begin{theorem}
\label{thm:mainResult}
Suppose $N$ is an orientable $4$-manifold.  If $\kappa\in
\Omega^2_2(N)$ satisfies $\kappaII=0$, then the following conditions
are equivalent:
\begin{enumerate}
\item $\kappa^2 = -f \operatorname{Id}$ for some function $f\in C^\infty(N)$ with $f>0$.
\label{coI}
\item 
  \label{coII} 
  There exists a Lorentz metric $g$ and a nonvanishing function $f\in
  C^\infty(N)$ such that
\begin{eqnarray}
   \kappa &=& f \ast_g.
\end{eqnarray}
\item 
\label{coIII} 
$\kappaIII=0$ and there exists a Lorentz metric $g$ such that 
\begin{eqnarray*}
  F(\kappa) &=& F(\ast_g),
\end{eqnarray*}
where $F(\kappa)$
is the Fresnel surface for $\kappa$ and $F(\ast_g)$
is the Fresnel surface for
the $2\choose 2$-tensor $\ast_g$.
\end{enumerate}
Moreover, when equivalence holds, then metrics $g$ in conditions \emph{\ref{coII}} and
\emph{\ref{coIII}} are conformally related.
\end{theorem}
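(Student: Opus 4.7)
The plan is to establish the implications \ref{coII}\,$\Rightarrow$\,\ref{coI}, \ref{coII}\,$\Rightarrow$\,\ref{coIII}, \ref{coI}\,$\Rightarrow$\,\ref{coII}, and finally the essentially new direction \ref{coIII}\,$\Rightarrow$\,\ref{coII}. The first is immediate: a Lorentz metric has index $\sigma\in\{1,3\}$, so $\ast_g^2=(-1)^\sigma\operatorname{Id}=-\operatorname{Id}$ and hence $\kappa^2=f^2\ast_g^2=-f^2\operatorname{Id}$ with $f^2>0$. The second combines Proposition \ref{prop:HodgeHasOnlyPrincipalPart} (so $\kappaIII=0$ for $f\ast_g$) with Proposition \ref{eq:ThgExp}, which identifies $F(\ast_g)$ with the null cone of $g$; since the Tamm-Rubilar tensor density is cubic in $\kappa$, rescaling by $f$ preserves its zero set, so $F(f\ast_g)=F(\ast_g)$.

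For \ref{coI}\,$\Rightarrow$\,\ref{coII} I would follow the Sch\"onberg--Urbantke-type approach flagged in the discussion before the theorem. Starting from $\kappaII=0$ and $\kappa^2=-f\operatorname{Id}$, I would construct a candidate metric $g$ on $N$ from a cubic polynomial in the components of $\kappa$, normalised by an appropriate density built from $\kappa$. The conceptually new step is to verify that this local assignment transforms as a genuine ${0\choose 2}$-tensor under the coordinate changes \eqref{eq:transRuleC}--\eqref{eq:transRuleD}; once this is settled by a direct (computer-algebra assisted) calculation, it is routine to check using $f>0$ that $g$ has Lorentz signature and that $\kappa$ is a scalar multiple of $\ast_g$.

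The heart of the theorem is \ref{coIII}\,$\Rightarrow$\,\ref{coII}. My plan is to work pointwise at $p\in N$ and invoke part \ref{cc:CC} of Proposition \ref{prop:localReduction} to choose coordinates in which $g\vert_p=k\operatorname{diag}(-1,1,1,1)$ and $\cA\vert_p$ is diagonal. By Proposition \ref{eq:ThgExp} the Fresnel polynomial of $\ast_g$ is a nonzero scalar multiple of $(g(\xi,\xi))^2$; since this is the square of an irreducible homogeneous quadratic whose real zero set is Zariski dense over $\setC$, the set equality $F(\kappa)=F(\ast_g)$ forces the polynomial identity
\begin{equation*}
\cG^{ijkl}\xi_i\xi_j\xi_k\xi_l \;=\; \lambda\,\bigl(g(\xi,\xi)\bigr)^2
\end{equation*}
for some scalar $\lambda\neq 0$, which by the polarisation identity of Proposition \ref{prop:polarId} lifts to the symmetric tensor equation $\cG^{(ijkl)}=\lambda\,g^{(ij}g^{kl)}$. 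Substituting the normal form prescribed by $\kappaII=0$ (so $\cC=\cD^T$, $\cA=\cA^T$, $\cB=\cB^T$), by $\kappaIII=0$ (so $\operatorname{trace}\kappa=0$), and by the diagonalisation of $\cA$, turns this into a finite polynomial system in the entries of $\cB,\cC,\cD$ and the scalar $\lambda$; feeding it into a Gr\"obner basis computation should collapse it to a simplified form from which one reads off that $(\cA,\cB,\cC,\cD)$ must assemble into the block decomposition of $f\ast_g$ for some scalar $f$.

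The main obstacle I expect is isolating the correct component in the Gr\"obner basis output: the naive solution variety of the polarised Fresnel equation typically contains several irreducible branches, including birefringent configurations (whose Fresnel polynomial factors as a product of two distinct quadratics rather than as a square), axion- or skewon-contaminated solutions, and degenerate branches in which the matching metric is Riemannian rather than Lorentzian. Separating the physically relevant branch is where the hypotheses $\kappaII=\kappaIII=0$ and the Lorentz signature of $g$ have to be used decisively, and where the bulk of the computational work will lie. Finally, the concluding conformal uniqueness statement follows by combining any two realisations $\kappa=f_1\ast_{g_1}$ from \ref{coII} and $F(\kappa)=F(\ast_{g_2})$ from \ref{coIII}: these imply that $g_1$ and $g_2$ share the same null cone, so Theorem \ref{prop:Principle} gives that $g_1$ and $g_2$ are conformally related.
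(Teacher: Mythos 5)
Your overall plan---reduce to a pointwise statement by choosing coordinates via Proposition \ref{prop:localReduction}~\ref{cc:CC}, match Fresnel quartics, polarise to a symmetric-tensor identity, and finish with a Gr\"obner-basis computation---is essentially the paper's plan. The implications \ref{coII}\,$\Rightarrow$\,\ref{coI} and \ref{coII}\,$\Rightarrow$\,\ref{coIII} are handled the same way, and the \ref{coI}\,$\Rightarrow$\,\ref{coII} sketch correctly anticipates the Sch\"onberg--Urbantke construction and the need to verify tensoriality of the resulting $g$ across charts (the paper also needs a preliminary lemma guaranteeing a chart in which the $3\times 3$ block $\cA$ is invertible).

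There is, however, a genuine gap in the crucial step of \ref{coIII}\,$\Rightarrow$\,\ref{coII}. You claim that because $(g(\xi,\xi))^2$ is the square of an irreducible quadratic with a Zariski-dense real zero set, the equality of Fresnel \emph{surfaces} already forces the polynomial identity $\cG^{ijkl}\xi_i\xi_j\xi_k\xi_l = \lambda\,(g(\xi,\xi))^2$. The Zariski argument only buys you one factor: since the Fresnel quartic of $\kappa$ vanishes on the real null cone, and the real null cone is Zariski dense in the complex quadric $\{A=0\}$ for $A=g(\xi,\xi)$, irreducibility of $A$ gives $A \mid \cG_\kappa$ over $\setC$, i.e.\ $\cG_\kappa = A\cdot R$ for some real quadratic $R$. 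It does \emph{not} force $A\mid R$. Indeed, if $R$ is a definite quadratic (say $R=\xi_0^2+\xi_1^2+\xi_2^2+\xi_3^2$), then $\{R=0\}_\setR=\{0\}\subset\{A=0\}_\setR$, so $A\cdot R$ and $A^2$ have the same real zero set while not being proportional. So the passage from coincidence of zero sets to proportionality of the quartics is exactly what needs to be proved, not something that drops out of Zariski density.

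The paper closes this hole differently (its ``Claim~1''). After fixing $g\vert_p=k\operatorname{diag}(1,-1,-1,-1)$, it writes $P(\xi_0,\vq)=-(\xi_0^2-\vert\vq\vert^2)^2$ and observes $P(1,0,0,0)\neq 0$, so $\cG^{0000}_{g,\kappa}\neq 0$; then, using continuity of the roots $r_i(\vq)$ of $Q(\cdot,\vq)$ and connectedness of $\setR^3\slaz$, it shows the roots take the form $s_i\vert\vq\vert$ with \emph{constant} signs $s_i$. Writing $\sigma$ for the number of $s_i=+1$, this pins $Q$ down to one of three explicit forms, and the requirement that $t\mapsto Q(1,t,0,0)$ be a polynomial (rather than merely continuous) kills the $\sigma=1$ and $\sigma=3$ cases, leaving $\sigma=2$, i.e.\ $Q=\mu\,(\xi_0^2-\vert\vq\vert^2)^2$. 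It is this multiplicity-two analysis, and not Zariski density alone, that gives the proportionality $\cG_{g,\kappa}=\mu\,\cG_{g,\ast g}$ needed before the Gr\"obner step. Without it, your Gr\"obner system would be set up with an unjustified hypothesis. The remaining pieces of your sketch (polarisation to a symmetric-tensor identity, Gr\"obner reduction of the $36$ resulting polynomial equations to the unique real solution $\kappa\vert_p=\lambda^{-1/3}\ast_g\vert_p$, and the conformal uniqueness via Theorem \ref{prop:Principle}) line up with what the paper actually does.
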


\begin
{proof}
For implication \ref{coI} $\Rightarrow$ \ref{coII} let $\eta = f^{-1/2}
\kappa$ 
whence $\eta^2=-\operatorname{Id}$, and
let $h$ be an auxiliary positive definite Riemann metric on $N$. 
Let $\mathscr{T}$ be an atlas given by applying Lemma \ref{lemma:AinvLocally} 
to $\eta$. For the local claim, let $(U,x^i)$ be a chart in $\mathscr{T}$,
and in this chart let $\eta$ be
 represented by $3\times 3$ matrices $\cA$ and $\cK$.
With computer algebra we then obtain
\begin{eqnarray}
\label{eq:ThExp}
    \cG_{h,\eta}(\xi,\xi,\xi,\xi) &=& \operatorname{sgn} \det \cA\ (G^{ab} \xi_a \xi_b)^2, \quad \xi\in \Lambda^1(U), 
\end{eqnarray}
where $G=(G^{ab})$ is the $4\times 4$ matrix
\begin{eqnarray}
\label{eq:SU}
G &=& \frac 1 {(\det h)^{1/4}} \ \frac{1} {\vert \det \cA\vert^{1/2}} 
\left( \begin{array}{c|c}
  \det \cA &    k^i   \\
           \hline
k^j        & -\cA^{ij} + (\det \cA)^{-1} k^i k^j,
\end{array} \right),
\end{eqnarray}
and $k^i = \cA^{ib} \frac 1 2 \varepsilon_{bcd} \cK^{cd}$ for $i\in
\{1,2,3\}$.
Using a Shur complement \cite[Theorem 3.1.1]{Prasolov:1994} we find that
\begin{eqnarray*}
     \det G &=&   
\label{eq:Gdet}
-\frac{1}{\det h}.
\end{eqnarray*} 
Hence $\det G<0$, so matrix $G$ is invertible and has constant signature $(-+++)$ or
$(+---)$ in $U$.  Let $G_{ij}$ be the $ij$th entry of the inverse of
$G$. In $U$ we define
\begin{eqnarray}
\label{eq:definitionOfGlobalG}
  g &=& \sigma_U G_{ij} dx^i \otimes dx^j,
\end{eqnarray}
where constant $\sigma_U\in \{-1,1\}$ is chosen such that $g$ has signature
$(-+++)$.  Then $g$ defines a smooth symmetric $0\choose 2$-tensor 
in $U$ with signature $(-+++)$, and by computer algebra we have 
\begin{eqnarray}
\label{eq:localClaimEta}
\eta\vert_U &=& -\operatorname{sgn}  \operatorname{det} \cA \,  \ast_g.
\end{eqnarray}
This completes the local claim in \ref{coI} $\Rightarrow$ \ref{coII}.
For the global claim, let $(U, x^i)$ and $(\widetilde U,\widetilde
x^i)$ be overlapping charts in $\mathscr{T}$, and in these charts let
$G^{ij}$ and $\widetilde G^{ij}$ be defined as above. Since
$\cG_{h,\eta}$ is a tensor, equation \eqref{eq:ThExp} implies that
\begin{eqnarray} 
\label{eq:coordChange}
\operatorname{sgn} \det \cA\ (G^{ij} \xi_i \xi_j)^2
&=&
\operatorname{sgn} \det \widetilde \cA\ (\widetilde G^{ij} \pd{x^r}{\widetilde x^i}\pd{x^s}{\widetilde x^j}
\xi_r \xi_s)^2
\end{eqnarray}
for all $\xi=\xi_i dx^i\in \Lambda^1(U\cap \widetilde U)$. Since $G^{ab}$ is
non-degenerate we can find a $\xi$ such that the left hand side is
non-zero. Thus $\operatorname{sgn} \det \cA=\operatorname{sgn} \det
\widetilde \cA$ in $U\cap \widetilde U$ and $\operatorname{sgn} \det
\cA$ in equation \eqref{eq:localClaimEta} defines a smooth function
$N\to \setR$.  
By Theorem \ref{prop:Principle} \ref{eq:xx3} $\Rightarrow$ \ref{eq:xx1}
there exists a smooth nonvanishing function $\lambda\colon U\cap \widetilde U\to \setR$ such that 
\begin{eqnarray*}
 G^{ij} &=& \lambda \widetilde G^{rs} \pd{x^i}{\widetilde x^r}\pd{x^j}{\widetilde x^s}.
\end{eqnarray*}
Equation \eqref{eq:coordChange} implies that function $\lambda$ can only take values $\{-1,+1\}$.
Thus
\begin{eqnarray*}
 G_{ij} &=& \lambda \widetilde G_{rs} \pd{\widetilde x^r}{ x^i}\pd{\widetilde x^s}{ x^j}.
\end{eqnarray*}
Since $\sigma_U G_{ij}$ and $\sigma_{\widetilde U} \widetilde G_{ij}$ both have signature $(-+++)$.
It follows that $\lambda \sigma_U = \sigma_{\widetilde U}$ in $U\cap \widetilde U$, and equation
\eqref{eq:definitionOfGlobalG} defines a tensor on $N$.
This completes the proof of implication \ref{coI} $\Rightarrow$ \ref{coII}. 

Implication \ref{coII} $\Rightarrow$ \ref{coIII} 
follows by Propositions \ref{prop:HodgeHasOnlyPrincipalPart} and \ref{eq:ThgExp}. 

For the proof of implication \ref{coIII} $\Rightarrow$ \ref{coI} we first establish two subclaims:

\textbf{Claim $1$.}  The $4\choose 0$-tensor $\cG_{g,\ast g}$ is pointwise proportional to $\cG_{g,\kappa}$
by a non-zero constant.
 
Let $p\in N$. By Proposition \ref{prop:polarId} we only need to show that there exists  a $\lambda\in \setR$ such that
\begin{eqnarray*}
  \cG_{g,\ast g}(\xi,\xi,\xi,\xi) 
&=& \lambda \cG_{g,\kappa}(\xi,\xi,\xi,\xi), \quad \xi\in \Lambda^1_p(N).
\end{eqnarray*}
Let $x^i$ be coordinates around $p$ such that $g\vert_p=k
\operatorname{diag}(1,-1,-1,-1)$ for $k\in \{\pm 1\}$.  In these
coordinates, let $\cG^{ijkl}_{g,\ast g}$ and $\cG^{ijkl}_{g,\kappa}$
be components for the symmetric $4\choose 0$-tensors $\cG_{g,\ast
  g}\vert_p$ and $\cG_{g,\kappa}\vert_p$, so that
$$
  \cG_{g,\ast g}(\xi,\xi,\xi,\xi) =\cG^{ijkl}_{g,\ast_g} \xi_i \xi_j \xi_k \xi_l, \quad
\cG_{g,\kappa}(\xi,\xi,\xi,\xi) =\cG^{ijkl}_{g,\kappa} \xi_i \xi_j \xi_k \xi_l
$$
for $\xi = \xi_i dx^i\vert_p$. Using these components, let 
$P,Q$ be the polynomials $P,Q\colon \setR^4\to \setR$,
$$ 
   P(\xi_0,\vq)  = \cG^{ijkl}_{g,\ast_g} \xi_i \xi_j \xi_k \xi_l, \quad
  Q(\xi_0, \vq) =\cG^{ijkl}_{g,\kappa} \xi_i \xi_j \xi_k \xi_l,
$$ 
where $\xi_0\in \setR$, $\vq=(\xi_1,\xi_2,\xi_3) \in \setR^3$.
By
Proposition \ref{eq:ThgExp},
\begin{eqnarray*}
  P(\xi_0, \vq) &=& -(\xi_0^2 - \vert\vq\vert^2)^2\\
   &=& -\left(\xi_0-\vert\vq\vert \right)^2 \left(\xi_0+\vert\vq\vert \right)^2, 
\end{eqnarray*}
for all $(\xi_0, \vq)\in \setR^4$ when $\vert \vq\vert$ is the
Euclidean norm of $\vq$.  Thus $P(1,0,0,0)\neq 0$ so $dx^0\vert_p\notin
F_p(\ast_g)=F_p(\kappa)$ whence $\cG_{g,\kappa}^{0000}\neq 0$.  For each
$\vq\in \setR^3$, $Q(\xi_0,\vq)$ is then a fourth order polynomial in
$\xi_0$ with coefficients determined by $\vq\in \setR^3$. Hence there
exists continuous maps
$$
  r_i\colon \setR^3\to \setC, \quad i\in \{1,2,3,4\}
$$
so that for all $\vq\in \setR^3$, $\{r_i(\vq)\}_{i=1}^4$ are the roots
of $Q(\cdot, \vq)$ \cite{NaulinPabst:1994}. For each $\vq\in \setR^3$
there exists a $\alpha(\vq)\in \setR$ such that
\begin{eqnarray}
\label{eq:QxiPol}
 Q(\xi_0, \vq) &=& \alpha(\vq) \prod_{i=1}^4 (\xi_0 - r_i(\vq)), \quad \xi_0\in \setR.
\end{eqnarray}
Applying $\partial^4/\partial\xi_0^4$ to both sides implies that
$\alpha(\vq)=\cG^{0000}_{g,\kappa}$.  In particular, the map
$\vq\mapsto \alpha(\vq)$ is constant and non-zero. Let $\mu =
\cG^{0000}_{g,\kappa}$.
Since $P$ and $Q$ have the same zero set, there exists functions
$s_i\colon \setR^3\to \{-1,1\}$ such that
\begin{eqnarray*}
  r_i(\vq) &=& s_i(\vq) \vert \vq\vert, \quad \vq \in \setR^3, \, i\in \{1,2,3,4\}.
\end{eqnarray*}
We know that $\setR^3\slaz$ is path connected. Hence $\setR^3\slaz$ is
connected. For a contradiction, suppose that
$s_i(\setR^3\slaz)=\{-1,+1\}$ for some $i\in \{1,2,3,4\}$. Then
$\setR^3\slaz = U_+\cup U_-$ for open, non-empty and disjoint sets
$U_\pm$ defined as
\begin{eqnarray*}
  U_\pm &=& \{\vq\in \setR^3\slaz: \pm r_i(\vq) > 0\}.
\end{eqnarray*}
It follows that there are constants $s_1,s_2, s_3,s_4\in \{-1,+1\}$ such that 
\begin{eqnarray}
\label{eq:ridef}
  r_i(\vq) &=& s_i \vert \vq\vert, \quad \vq \in \setR^3, \, i\in \{1,2,3,4\}.
\end{eqnarray}
Let $\sigma$ be the number of $s_i$ with $s_i=1$.  If $\vq\in
\setR^3\slaz$, then polynomial $P(\cdot, \vq)$ has two distinct roots
$\pm \vert \vq\vert$.  Hence $\sigma=0$ or $\sigma = 4$ are not
possible, so $\sigma\in \{1,2,3\}$ and by equation \eqref{eq:QxiPol},
\begin{eqnarray*}
\label{eq:Qexp}
 Q(\xi_0, \vq) &=& \mu \left(\xi_0-\vert\vq\vert \right)^{\sigma} \left(\xi_0+\vert\vq\vert \right)^{4-\sigma}\\
  &=& \begin{cases}
\mu \left( \xi_0^4 - \vert \vq\vert^4 + 2\xi_0 \vert \vq\vert (\xi_0^2 -\vert\vq\vert^2)\right), &\mbox{if}\, \sigma=1,\\
\mu\, (\xi_0^2 - \vert \vq\vert^2)^2, &\mbox{if}\, \sigma=2,\\
\mu \left( \xi_0^4 - \vert \vq\vert^4 - 2\xi_0 \vert \vq\vert (\xi_0^2 -\vert\vq\vert^2)\right) , &\mbox{if}\, \sigma=3,
\end{cases}
\end{eqnarray*}
for all $(\xi_0, \vq)\in \setR^4$. Since $Q$ is a polynomial, we know
that $t\mapsto Q(1,t,0,0)$ is smooth near $0$. This is only possible when $\sigma=2$, and Claim $1$
follows.

\textbf{Claim $2$.} At each $p\in N$ there exists a non-zero
$\lambda\in \setR$ such that $\kappa\vert_p = \lambda \ast_g\vert_p$.

Let $p\in N$. By Proposition \ref{prop:localReduction} \ref{cc:CC}
there are coordinates $x^i$ around $p$ such that $g\vert_p=k
\operatorname{diag}(1,-1,-1,-1)$ for some $k\in \{\pm 1\}$ and
$\cA\vert_p$ is diagonal.  For $\xi = \xi_i dx^i\vert_p$ we then have
$$
   \cG_{g,\ast_g}(\xi,\xi,\xi,\xi) = \cG^{ijkl}_{g,\ast_g} \xi_i \xi_j \xi_k \xi_l, \quad
   \cG_{g,\kappa}(\xi,\xi,\xi,\xi) = \cG^{ijkl}_{g,\kappa} \xi_i \xi_j \xi_k \xi_l,
$$
where $\cG^{ijkl}_{g,\ast g}$ and $\cG^{ijkl}_{g,\kappa}$ are components for 
$\cG_{g,\ast_g}$ and $\cG_{g,\kappa}$ in coordinates $x^i$.
By Claim $1$ there exists a $\lambda\in \setR\slaz$ such that
\begin{eqnarray*}
\cG^{ijkl}_{g,\ast g} &=& \lambda \cG^{ijkl}_{g,\kappa}, \quad 0\le i\le j\le k\le l\le 3.
\end{eqnarray*}
Moreover, $\kappaIII=0$. We then have $36$ polynomial equations for
$\kappa$.  Using the Gr\"obner basis (see Appendix \ref{app:Groebner})
for these equations we find that the equations have a unique real
solution for $\kappa$ and this solution is given by $\kappa\vert_p =
\lambda^{-1/3} \ast_g\vert_p$.  This completes the proof of Claim $2$.

By Claim $2$, there exists a map $\lambda \colon N\to \setR\slaz$ such that
$\kappa = \lambda  \ast_g$ whence $\kappa^2 = -\lambda^2 \operatorname{Id}$.  To
see that $\lambda^2$ is smooth it suffices to note that $\lambda^2 = -\frac 1 6
\operatorname{trace} \kappa^2$.  This completes the proof of
implication \ref{coIII} $\Rightarrow$ \ref{coI}.


When equivalence holds, 
Proposition \ref{eq:ThgExp} and Theorem \ref{prop:Principle} imply that the
Lorentz metrics in conditions \ref{coII} and \ref{coIII} are conformally related.
\end{proof}


The next lemma was used to prove implication \ref{coI} $\Rightarrow$
\ref{coII} in Theorem \ref{thm:mainResult}.  In the proof of the lemma, Claim 1 
 is based on \cite[Sections D.4--D.5]{Obu:2003}.

\begin{lemma} 
\label{lemma:AinvLocally}
Suppose $N$ is an orientable $4$-manifold and $\kappa\in
\Omega^2_2(N)$.  If $\kappa$ has no skewon component and $\kappa^2 =
-\operatorname{Id}$, then $N$ has an oriented atlas $\mathscr{T}$ with
the following property:
Each $p\in N$ can be covered with a connected chart $(U,x^i)\in
\mathscr{T}$ such that if $\cA, \cB, \cC, \cD$ represent $\kappa$ in
$U$, then
\begin{enumerate}
\item $\cA$ is invertible in $U$.
\label{lemma:collectB}
\item In $U$ there exists a smoothly varying antisymmetric $3\times 3$ matrix $\cK$ such that
$$
\quad\quad
\cB=  -\cA^{-1} \left( \operatorname{Id}+ \left(\cK \cA^{-1}\right)^2\right),\quad
\cC = \cK\cA^{-1},\quad
\cD = -\cA^{-1} \cK.
$$
\label{lemma:collectD}
\end{enumerate}
\end{lemma}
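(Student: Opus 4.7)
Fix an orientation on $N$. The atlas $\mathscr{T}$ will consist of positively oriented charts $(U, x^i)$ on which $\cA$ is invertible, and on each such chart I will exhibit the antisymmetric matrix $\cK$ required by part \ref{lemma:collectD}. The plan splits naturally into two parts: first, cover $N$ by charts satisfying \ref{lemma:collectB}; second, derive the structural formulas from the closure condition $\kappa^2 = -\operatorname{Id}$.

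\textbf{Covering $N$ by charts with $\cA$ invertible.} Given $p \in N$, I need an oriented chart around $p$ in which $\cA\vert_p$ is invertible; continuity of $\det \cA$ then guarantees invertibility on an open connected neighborhood of $p$, which we use as $U$. From the proof of Theorem \ref{eq:ThmDimTR} (specifically the identification of $\det \cA$ with $\det Q$ up to nonzero scale), $\cA\vert_p$ is invertible if and only if $dx^0\vert_p \notin F_p(\kappa)$. Thus it suffices to show that at every $p$ the Fresnel surface $F_p(\kappa)$ is a proper subset of $\Lambda^1_p(N)$: any $\xi \notin F_p(\kappa)$ may then be realised as $dx^0\vert_p$ by extending $\xi$ to a positively oriented coframe. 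The subclaim $F_p(\kappa) \neq \Lambda^1_p(N)$ is the main technical obstacle. By Proposition \ref{prop:polarId} the equality $F_p(\kappa) = \Lambda^1_p(N)$ would force the Tamm-Rubilar tensor density $\cG\vert_p$ to vanish identically. Representing $\kappa$ at $p$ by the matrices $\cA, \cB, \cC, \cD$ subject to the no-skewon symmetries $\cA = \cA^T$, $\cB = \cB^T$, $\cC = \cD^T$ (Proposition \ref{prop:localReduction} \ref{cc:AA}) and to the polynomial identities coming from $\kappa^2 = -\operatorname{Id}$ derived in the next paragraph, one verifies with computer algebra that the resulting system is inconsistent with $\cG\vert_p = 0$; this rules out the degenerate case.

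\textbf{Derivation of the structure on $U$.} Expanding $\kappa^2 = -\operatorname{Id}$ through equations \eqref{eq:kappaLocal_Ix}--\eqref{eq:kappaLocal_IIx} by applying $\kappa$ twice to an arbitrary $2$-form with $(E,B)$-decomposition and equating the result to the negative, then simplifying via the no-skewon symmetries, yields three independent matrix identities:
\begin{eqnarray*}
\cA\cB + \cC^2 &=& -\operatorname{Id},\\
\cC\cA + \cA\cC^T &=& 0,\\
\cB\cC + \cC^T\cB &=& 0.
\end{eqnarray*}
Define $\cK := \cC\cA$. The middle identity reads $\cK + \cK^T = 0$, so $\cK$ is antisymmetric. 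Since $\cA$ is invertible on $U$, this gives $\cC = \cK\cA^{-1}$, and then $\cD = \cC^T = (\cK\cA^{-1})^T = -\cA^{-1}\cK$ using symmetry of $\cA^{-1}$ and antisymmetry of $\cK$. The first identity yields $\cB = -\cA^{-1}(\operatorname{Id} + \cC^2) = -\cA^{-1}(\operatorname{Id} + (\cK\cA^{-1})^2)$; a short direct calculation shows that this expression is automatically symmetric and that the third identity $\cB\cC + \cC^T\cB = 0$ holds as an algebraic consequence, so no additional constraints are imposed. Smoothness of $\cK = \cC\cA$ on $U$ is immediate from smoothness of $\cC$ and $\cA$. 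Collecting all such positively oriented charts into $\mathscr{T}$ completes the construction.
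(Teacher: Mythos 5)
Your derivation of the structural formulas (the second paragraph) follows the paper almost verbatim: the same three matrix identities from $\kappa^2 = -\operatorname{Id}$, the same definition $\cK := \cC\cA$, the same deduction of $\cC, \cD, \cB$ and the observation that symmetry of $\cB$ and the third identity are automatic. That part is correct.

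Where you diverge is in establishing condition \ref{lemma:collectB}, and here there is a genuine gap. You correctly observe that $\det\cA\vert_p = 0$ is equivalent to $dx^0\vert_p \in F_p(\kappa)$ (the computation $Q = 2\cA$ in the proof of Theorem \ref{eq:ThmDimTR} gives this), and you reduce the problem to showing $F_p(\kappa) \neq \Lambda^1_p(N)$, i.e.\ $\cG\vert_p \neq 0$. But that subclaim --- that the no-skewon symmetries together with $\kappa^2 = -\operatorname{Id}$ are inconsistent with $\cG\vert_p = 0$ --- is only asserted (``one verifies with computer algebra''), never actually carried out or even estimated for feasibility. It is not a minor bookkeeping check: after imposing $\cA = \cA^T$, $\cB = \cB^T$, $\cC = \cD^T$ you are left with $21$ free variables, and you would be trying to show the inconsistency of the closure constraints together with all $35$ independent components of $\cG\vert_p$. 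Nothing in the paper performs or supports such a computation for this lemma, and it is not clear it would terminate in reasonable time. The paper's proof deliberately avoids this global elimination problem by choosing coordinates in which $\cA\vert_p$ is diagonal (via Proposition \ref{prop:localReduction} \ref{cc:BB}) and running a case analysis on the rank of $\cA\vert_p$: ranks $0$ and $2$ are ruled out by short, explicit contradictions (e.g.\ $(\cC^3{}_3)^2 = -1$ over the reals), rank $3$ is immediate, and rank $1$ is fixed by the explicit coordinate change $\widetilde x^0 = x^0 + x^3$, which makes $\det\widetilde\cA\vert_p = -\cB_{11}(\cC^3{}_2)^2 \neq 0$. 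Each step of that analysis is a tiny algebraic identity in few variables, which is precisely what makes the argument check out. To salvage your route you would have to either actually execute and report the Gr\"obner computation (and verify it lies within reach), or replace it with an elementary argument; as written, the proof is incomplete at its key step.

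One further small point: your appeal to ``extending $\xi$ to a positively oriented coframe'' to realise $\xi = dx^0\vert_p$ and to the orientation of $N$ takes care of the oriented atlas compatibility more cleanly than the paper's Claim~2 (which invokes a lemma of Trofimov after the coordinate changes in Cases~A and~C); once all charts in $\mathscr{T}$ are positively oriented with respect to a fixed orientation, compatibility is automatic, so that part of your proposal is fine.
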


\begin{proof}
  Let us first make an observations: Suppose $\{x^i\}_{i=0}^3$ are
  arbitrary coordinates for $N$ and $\cA$, $\cB$, $\cC, \cD$ are
  $3\times 3$ matrices that represent $\kappa$ in these
  coordinates. Then Proposition \ref{prop:localReduction} \ref{cc:AA}
  implies that $\kappa^2=-\operatorname{Id}$ is equivalent with
\begin{eqnarray}
\cC^2 + \cA\cB 
\label{eq:abc1} 
&=& -\operatorname{Id}, \\
\cB \cC + \cC^T \cB 
\label{eq:abc2}&=& 0, \\
\cC \cA + \cA\cC^T
\label{eq:abc3} 
&=& 0.
\end{eqnarray} 

Let $\mathscr{T}_0$ is a maximal oriented atlas for $N$. 
The proof is divided into two subclaims, Claim $1$ and Claim $2$.

\textbf{Claim 1.} For each $p\in N$ there exists a connected chart
$(U, x^i)$ that satisfy condition \ref{lemma:collectB} and there
exists a chart $(W, y^i)\in \mathscr{T}_0$ with $U\cap W\neq
\emptyset$ such that the transition map $x^i\mapsto y^i$ is
orientation preserving.

%
 By Proposition \ref{prop:localReduction} \ref{cc:BB} we can find a
 connected chart $(U, x^i)$ that contains $p$ and where matrix $\cA$ for
 $\kappa$ is diagonal at $p$.  The rest of Claim $1$ is divided into
 four cases depending on the eigenvalues of $\cA\vert_p$.

\textbf{Case A.} Suppose all three eigenvalues of $\cA\vert_p$ are
non-zero.  Since eigenvalues depend continuously on the matrix entries
\cite{NaulinPabst:1994}, we can shrink $U$ and 
part \ref{lemma:collectB} follows. Claim $1$ follows by possibly
reflecting the $x^1$-coordinate.

\textbf{Case B.} Suppose $\cA\vert_p$ has two non-zero eigenvalues. By
permutating the coordinates (see equation \eqref{eq:transRuleA}) we
may assume that $\cA\vert_p=\operatorname{diag}(a_1, a_2,0)$ for some
$a_1, a_2\neq 0$.  Writing out equations
\eqref{eq:abc1}--\eqref{eq:abc3} with computer algebra gives
$$
    \cC^1\,_1=   \cC^2\,_2=   \cC^3\,_1=\cC^3\,_2=0,\quad (\cC^3\,_3)^2=-1
$$
at $p$.  The last equation contradicts that $\cC$ is real. Case B is
therefore not possible.
 
\textbf{Case C.} Suppose $\cA\vert_p$ has one non-zero eigenvalue.  As
in Case B, we can find a chart $(U,x^i)$ for which $\cA\vert_p =
\operatorname{diag}(a_1,0,0)$ for some $a_1\neq 0$.  Writing out
equations \eqref{eq:abc1}--\eqref{eq:abc3} as in Case B gives
$$
   \cC^1\,_1=   \cC^2\,_1=   \cC^3\,_1=0,\quad \cB_{11}\neq 0, \quad \cC^2\,\,_3\neq 0,\quad \cC^3\,_2\neq 0
$$
at $p$. Let  $\{\widetilde x^i\}_{i=0}^3$ be coordinates around $p$ defined as
\begin{eqnarray*}
 \widetilde x^0 &=& x^0 + x^3,\\
\widetilde x^i &=& x^i, \quad i\in \{1,2,3\}.
\end{eqnarray*}
In these coordinates, matrix $\widetilde \cA\vert_p$ 
has determinant $-\cB_{11} (\cC^3\,_2)^2$, which is 
non-zero, and Claim $1$  follows as in Case $A$.
%

\textbf{Case D.} Suppose all eigenvalues of $\cA\vert_p$ are zero. Then
$\cA\vert_p=0$ and equation \eqref{eq:abc1} implies that
$(\det\cC\vert_p)^2=-1$. This contradicts that $\cC\vert_p$ is a real matrix. 
Case D is therefore not possible.

\textbf{Claim 2.} Let $\mathscr{T}$ be the collection of all charts
$(U, x^i)$ as in Step $1$ when $p$ ranges over all points in $N$. Then
$\mathscr{T}$ satisfies the sought properties.

Let $(U, x^i)$ and $(\widetilde U, \widetilde x^i)$ be overlapping
charts in $\mathscr{T}$. By Claim 1 and \cite[Lemma
13.9]{Trofimov:1994}, each chart $U$ and $\widetilde U$ is compatible
with all charts in $\mathscr{T_0}_0$. Hence the transition map $x^i
\mapsto \widetilde x^i$ is orientation preserving, and $\mathscr{T}$
is oriented.
We know that each chart in $\mathscr{T}$ satisfies property
\ref{lemma:collectB}, and property \ref{lemma:collectD} follows by
defining $\cK = \cC \cA$. Indeed, $\cK$ is antisymmetric by equation
\eqref{eq:abc3}, and the expression for $\cB$ follows by equation
\eqref{eq:abc1}.
\end{proof}

\section{Non-injectivity results}
\label{sec:uni}
Implication \ref{coIII} $\Rightarrow$ \ref{coII} in Theorem
\ref{thm:mainResult} shows that for a special class of medium, the
Fresnel surface determines the medium up to a conformal factor.  In
this section we will describe results and examples where the opposite
is true. In the below we will see that there are various
non-uniquenesses that prevents us from determining $\kappa$ (or even
the conformal class of $\kappa$) from only the Fresnel surface
$F(\kappa)$.

Let us study the non-injectivity of the two maps in the diagram below:
\begin{eqnarray}
\label{dia:diag2}
\kappa  \quad \to \quad \cG(\kappa) \quad \to \quad F(\kappa),\quad \quad\kappa\in \Omega^2_2(N),
\end{eqnarray}
where $\cG(\kappa)$ is the Tamm-Rubilar $4\choose 0$-tensor density
induced by $\kappa$.
%

\subsection{Non-injectivity of leftmost map}
\label{sec:TRinjective1}
Let us first study the non-injectivity of the leftmost map in diagram
\eqref{dia:diag2}, that is, the map
\begin{eqnarray}
\label{eq:LdefRmap}
\kappa\quad \to\quad \cG(\kappa), \quad\quad \kappa\in \Omega^2_2(N).
\end{eqnarray}
Parts \ref{thm:FkappaInvB:ii}--\ref{thm:FkappaInvB:iv} in the next
theorem describe three invariances that make the map in \eqref{eq:LdefRmap}
non-injective.  The first two parts are well known \cite[Section
2.2]{Obu:2003}.  However, let us make three remarks regarding part
\ref{thm:FkappaInvB:iv}.
First, an interpretation of  part \ref{thm:FkappaInvB:iv} is as follows:
If $F,G$ solve the sourceless Maxwell equations in medium $\kappa$,
then $G,F$ solve the sourceless Maxwell equations in medium $\kappa^{-1}$.
In this setting, part \ref{thm:FkappaInvB:iv} states that both media have the
same Fresnel surfaces.
Second, suppose $\ast_g$ is the $2\choose 2$-tensor induced by a
pseudo-Riemann metric $g$.  Then $\ast_g^2= \pm
\operatorname{Id}$, so $\ast_g^{-1}= \pm \ast_g$, whence
$\cG(\ast_g)$ and $\cG(\ast_g^{-1})$ are always conformally
related. Part \ref{thm:FkappaInvB:iv} states that this is
not only a result for Hodge star operators, but a general result
for \emph{all} $2\choose 2$-tensors.
Third, the proof of part \ref{thm:FkappaInvB:iv} is based on computer
algebra.  Of all the proofs in this paper, this computation is
algebraicly most involved.  For example, if we write out equation
\eqref{eq:bigEquation} as a text string, it requires almost $13$
megabytes of memory.

\begin{theorem}
\label{thm:FkappaInvB}
Suppose $\kappa\in \Omega^2_2(N)$ where $N$ is a $4$-manifold. Then
\begin{enumerate}
\item
\label{thm:FkappaInvB:i}
$\cG (f\kappa)=f^3 \cG(\kappa)$ for all $f\in C^\infty(N)$,
\item 
\label{thm:FkappaInvB:ii}
$\cG(\kappaII)=0$,
\item
\label{thm:FkappaInvB:iii}
$
  \cG(\kappa) = \cG(\kappa + f \operatorname{Id})
$ for all $f\in C^\infty(N)$,
\item
\label{thm:FkappaInvB:iv}
$\cG(\kappa^{-1})=\cG(-(\det\kappa)^{-1/3} \kappa)$ when
$\kappa$ is invertible.
\end{enumerate}
\end{theorem}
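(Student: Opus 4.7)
The plan is to handle the four parts separately, as they rely on quite different ideas. Throughout I work in a chart and use the defining formula
\begin{equation*}
  \cG_0^{ijkl} = \tfrac{1}{48}\, \kappa^{a_1 a_2}_{b_1 b_2}\kappa^{a_3 i}_{b_3 b_4}\kappa^{a_4 j}_{b_5 b_6}\,\varepsilon^{b_1 b_2 b_5 k}\varepsilon^{b_3 b_4 b_6 l}\varepsilon_{a_1 a_2 a_3 a_4}.
\end{equation*}

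Part \ref{thm:FkappaInvB:i} is immediate from inspection: $\cG_0$ is a homogeneous cubic polynomial in the components of $\kappa$, so replacing $\kappa$ by $f\kappa$ multiplies $\cG_0$, and therefore its symmetrization $\cG$, by $f^3$.

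Parts \ref{thm:FkappaInvB:ii} and \ref{thm:FkappaInvB:iii} both say that $\cG(\kappa)$ depends only on the principal part $\kappaI$ in the decomposition $\Omega^2_2 = Z \oplus W \oplus U$ of Proposition \ref{theorem:Decomp}. For \ref{thm:FkappaInvB:ii} I would translate the defining relation $u\wedge \kappaII(v) = -\kappaII(u)\wedge v$ into a component antisymmetry on $\kappaII{}^{ij}_{rs}$ and verify, by direct index manipulation or by a short computer algebra check on the $15$-parameter skewon space, that the cubic expression above vanishes when all three $\kappa$-factors are skewon. For \ref{thm:FkappaInvB:iii} I would expand $\cG_0(\kappa + f\operatorname{Id})$ by trilinearity: this produces $\cG_0(\kappa)$ plus mixed terms in which one or more $\kappa$-slots are replaced by $\operatorname{Id}^{ij}_{rs} = \delta^i_r \delta^j_s - \delta^i_s \delta^j_r$. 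After the resulting deltas are contracted into the two upper $\varepsilon$'s or the lower $\varepsilon$, each mixed contribution turns out to be antisymmetric in at least one pair of the free indices $i,j,k,l$ and thus drops out upon passing from $\cG_0$ to $\cG$. Both results are also recorded in \cite{Obu:2003}.

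Part \ref{thm:FkappaInvB:iv} is the main obstacle and the step flagged in the statement as computationally heavy. The plan is to reduce it to a polynomial identity. By part \ref{thm:FkappaInvB:i} applied with $f = -(\det\kappa)^{-1/3}$, the claim is equivalent to
\begin{equation*}
  \cG(\kappa^{-1}) = -(\det \kappa)^{-1}\,\cG(\kappa).
\end{equation*}
Now $\kappa^{-1}$ viewed as a linear map on the six-dimensional fibers of $\Omega^2(N)$ equals $(\det\kappa)^{-1}\operatorname{cof}(\kappa)$, where $\operatorname{cof}(\kappa)$ is the cofactor $2\choose 2$-tensor, a polynomial of degree $5$ in the components of $\kappa$. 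Using part \ref{thm:FkappaInvB:i} once more and clearing denominators, the identity becomes
\begin{equation*}
  \cG\bigl(\operatorname{cof}(\kappa)\bigr) = -(\det\kappa)^{2}\,\cG(\kappa),
\end{equation*}
which is a polynomial identity of total degree $15$ in the $36$ components of $\kappa$. I would verify it by treating those components as formal indeterminates, implementing $\det$ and $\operatorname{cof}$ on $\Omega^2_2$ in a computer algebra system, expanding both sides, and checking that the difference is the zero polynomial. The main difficulty is purely computational: the paper's remark that the raw expansion of \eqref{eq:bigEquation} occupies some $13$~MB of text is a direct reflection of this degree, and no new conceptual input is required once parts \ref{thm:FkappaInvB:i}--\ref{thm:FkappaInvB:iii} are in hand.
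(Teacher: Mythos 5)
Your proposal matches the paper's proof essentially step for step: part \ref{thm:FkappaInvB:i} follows from cubic homogeneity, parts \ref{thm:FkappaInvB:ii}--\ref{thm:FkappaInvB:iii} are ultimately deferred to \cite{Obu:2003} just as the paper does (your sketch of a direct argument is a pleasant elaboration but not a departure), and for part \ref{thm:FkappaInvB:iv} you clear denominators in exactly the same way to arrive at the paper's polynomial identity $\cG(\operatorname{adj}\kappa) = -(\det\kappa)^2\cG(\kappa)$, to be verified by computer algebra. The only quibble is terminological: what you call the ``cofactor $2\choose 2$-tensor'' should be the adjugate, since the relation used is $\kappa^{-1} = (\det\kappa)^{-1}\operatorname{adj}\kappa$, but this is the same degree-$5$ object and the resulting degree-$15$ identity is the paper's equation \eqref{eq:bigEquation}.
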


\begin{proof}
Part \ref{thm:FkappaInvB:i} follows by the definition, and parts
\ref{thm:FkappaInvB:ii}--\ref{thm:FkappaInvB:iii} are proven in \cite[Section 2.2]{Obu:2003}.
Therefore  we only need to prove part \ref{thm:FkappaInvB:iv}.
Let $\operatorname{adj} \kappa = \det \kappa \, \kappa^{-1}$ be the \emph{adjugate} of $\kappa$.
By part \ref{thm:FkappaInvB:i} it suffices to show that
\begin{eqnarray}
\label{eq:bigEquation}
(  \det \kappa)^2 \cG^{ijkl}_\kappa + \cG^{ijkl}_{\operatorname{adj} \kappa} &=& 0,
\quad 0\le i\le j\le k \le l \le 3,
\end{eqnarray}
where $\cG^{ijkl}_\kappa$ and $\cG^{ijkl}_{\operatorname{adj} \kappa}$
are components of the Tamm-Rubilar tensor densities of $\kappa$ and
$\operatorname{adj} \kappa$, respectively.  The motivation for rewriting
the claim as in equation \eqref{eq:bigEquation} is that now both terms are
polynomials. By using the method described in Appendix
\ref{app:veryLarge} we obtain that equations \eqref{eq:bigEquation}
hold, and part \ref{thm:FkappaInvB:iv} follows.
\end{proof}

Theorem \ref{thm:FkappaInvB}  \ref{thm:FkappaInvB:ii} 
shows that if we restrict the map in equation \eqref{eq:LdefRmap} to purely
skewon tensors, we do not obtain an injection.  The next example shows
that the same map is neither an injection when restricted to tensors of
purely principal type.

\begin{example}
On $N=\setR\times \setR^3$ with coordinates $\{x^i\}_{i=0}^3$, let $\kappa$ 
be the $2\choose 2$-tensor defined by $3\times 3$-matrices
\begin{eqnarray*}
\cA= 0_{3\times 3},\,\,\,
\cB= \begin{pmatrix} 
0 & 0  & \lambda_1 \\
0 & 0 & \lambda_2 \\
\lambda_1 & \lambda_2 & \lambda_3 
\end{pmatrix},\,\,\,
\cC= \begin{pmatrix}
-2^{-1/3} & 0 & \lambda_4 \\
0 & -2^{-1/3} & \lambda_5 \\
0 & 0& 2^{2/3}
\end{pmatrix},\,\,\,
\cD = \cC^T,
\end{eqnarray*}
where parameters $\lambda_1,\ldots, \lambda_5\in \setR$ are arbitrary.
Then $\kappa$ has only a principal part, $\det \kappa = 1$, and 
\begin{eqnarray*}
\cG_{h,\kappa}(\xi,\xi,\xi,\xi) &=& 0, \quad \xi\in \Lambda^1(N)
\end{eqnarray*}
for any pseudo-Riemann metric $h$ on $N$.
\proofBox
\end{example}

When proving implication \ref{coIII} $\Rightarrow$ \ref{coI} in
Theorem \ref{thm:mainResult} we need to assume that $\kappa$ has real
coefficients.  In fact, for $2\choose 2$-tensors with complex
coefficients a decomposition into principal-, skewon-, and axion
components does not seem to have been developed. The next example
shows that there are non-trivial complex tensors whose Fresnel surface
everywhere coincides with the Fresnel surface for the standard
Minkowski metric $g_0 = \operatorname{diag}(1,-1,-1,-1)$.  (For
$\kappa\in \Omega^2_2(N,\setC)$ we define the Fresnel surface using
the same formulas as for real $\kappa$.) 

\begin{example}
\label{ex:complexExample}
On $N=\setR \times \setR^3$ with coordinates $\{x^i\}_{i=0}^3$, let $\kappa$ 
be the $2\choose 2$-tensor with complex coefficients defined by $3\times 3$-matrices 
\begin{eqnarray*}
\cA&=& 
-\begin{pmatrix}
\frac{1}{2 z^2} & 0 & 0 \\
0 & 2z & 0 \\
0 & 0& z
\end{pmatrix},\,\,\,
\quad\quad\quad\quad\quad\quad\quad\,\,
\cB= -\cA, \quad\\
\cC&=& 
i \begin{pmatrix}
\frac 1 {3z^2} -z & 0 & 0 \\
0 & -\frac {1}{6z^2} + z & 0 \\
0 & 0& -\frac{1}{6z^2}
\end{pmatrix},\,\,\,\quad\quad
\cD = \cC,
\end{eqnarray*}
where $z$ is an arbitrary function $z\colon N \to \setC\slaz$ and $i$ is the complex unit.
At each $p\in N$ the Fresnel surface is then determined by
\begin{eqnarray*}
 \xi_0^2 - \xi_1^2-\xi_2^2-\xi_3^2&=&0, 
\end{eqnarray*}
where $\xi_i dx^i \in \Lambda^1_p(N)$, and
\begin{eqnarray*}
\operatorname{trace} \kappa = 0, \quad
 \det \kappa = \frac{\left(1+6z^3\right)^3 \, \left( 5 - 126 z^3 +684 z^6 - 648 z^9 \right) }{ 46 656\,\, z^{12}}.
\end{eqnarray*}
From the latter equation we see that for specific values of $z$, tensor $\kappa$ can be non-invertible as a linear map
\proofBox
\end{example}

\subsection{Non-injectivity of rightmost map}
\label{sec:TRinjective2}
The next example shows that there are skewon-free $2\choose 2$-tensors
$\kappa_1$ and $\kappa_2$ that have the same Fresnel surfaces, but
their Tamm-Rubilar tensors are not proportional to each other. This
shows that the rightmost map in equation \eqref{dia:diag2} is not
injective.  Let us point out that this contradicts the first
proposition in \cite{PunziEtAl:2009} whose proof does not analyse
multiplicities of roots to the Fresnel equation.

\begin{example}
On $N=\setR\times \setR^3$ with coordinates $\{x^i\}_{i=0}^3$, let $\kappa_1$ 
be the $2\choose 2$-tensor defined by $3\times 3$-matrices
\begin{eqnarray*}
\cA_1= \begin{pmatrix}
0 & -1 & 1 \\
-1 & -2 & 1 \\
1 & 1 & -1 
\end{pmatrix},\,\,\,
\cB_1= \begin{pmatrix}
0 & \frac 1 2  & 0 \\
\frac 1 2 & 0 & 0 \\
0 & 0 & 0 
\end{pmatrix},\,\,\,
\cC_1= \begin{pmatrix}
0 & 0 & 0 \\
0 & 2 & 1 \\
\frac 1 2 & -\frac 1 2 & 1 
\end{pmatrix},\,\,\,
\cD_1 = \cC_1^T.
\end{eqnarray*}
For the Euclidean metric $g_0$ on $N$ we then have
\begin{eqnarray*}
\cG_{g_0,\kappa_1}(\xi,\xi,\xi,\xi) &=& (\xi_0-\xi_1)(\xi_0-\xi_2)^3, \quad \xi\in \Lambda^1(N).
\end{eqnarray*}
To exchange the role of $\xi_1$ and $\xi_2$, we perform a coordinate
change $x_0\mapsto x_0$, $x_1\mapsto x_2$, $x_2\mapsto x_1$,
$x_3\mapsto x_3$.  With this as motivation we define $\kappa_2$ as the
$2\choose 2$-tensor defined by $3\times 3$-matrices
\begin{eqnarray*}
  \cA_2= \begin{pmatrix}
2 & 1  &-1 \\
1  & 0 & -1 \\
-1 & -1 & 1 
\end{pmatrix},\,\,\,
 \cB_2= \begin{pmatrix}
0 & -\frac 1 2  & 0 \\
-\frac 1 2 & 0 & 0 \\
0 & 0 & 0 
\end{pmatrix},\,\,\,
 \cC_2= \begin{pmatrix}
2 & 0 & 1 \\
0 & 0 & 0 \\
-\frac 1 2 & \frac 1 2 & 1 
\end{pmatrix},\,\,\,
\cD_2 = \cC_2^T.
\end{eqnarray*}
Then
\begin{eqnarray*}
\cG_{g_0,\kappa_2}(\xi,\xi,\xi,\xi) &=& -(\xi_0-\xi_1)^3(\xi_0-\xi_2), \quad \xi\in \Lambda^1(N).
\end{eqnarray*}
Here $\kappa_1$ and $\kappa_2$ are not proportional, 
their Tamm-Rubilar tensor densities are not proportional, but their Fresnel surfaces coincide.

Both $\kappa_1$ and $\kappa_2$ have $1$ has an eigenvalue of algebraic multiplicity $6$.
Hence
$$
  \det \kappa_1 = \det \kappa_2 = 1, \quad \operatorname{trace}\kappa_1 = \operatorname{trace}\kappa_2 = 6,
$$
and for the trace-free components $\widetilde \kappa_i = \kappa_i-\operatorname{Id}$ we have $\det \widetilde \kappa_i=0$.
\proofBox
\end{example}

\textbf{Acknowledgements.}  I would like to thank Luzi Bergamin and
Alberto Favaro for helpful discussions by email regarding
\cite{FavaroBergamin:2011}.

The author gratefully appreciates financial
support by the Academy of Finland (project 13132527 and Centre of
Excellence in Inverse Problems Research), and by the Institute of
Mathematics at Aalto University.

\appendix
\section{Gr\"obner bases}
\label{app:Groebner}
To prove implication \ref{coIII} $\Rightarrow$ \ref{coI} in Theorem
\ref{thm:mainResult} 
we use a Gr\"obner basis to solve a system of
polynomial equations. In this appendix we collect the results about
Gr\"obner bases that are needed for this step in the proof. These
results are standard and can, for example, be found in \cite[
pp. 5, 
29--32, 
76--77] {CoxLittleOShea:2007}. 

By $\setC[x]$ we denote the ring of polynomials with complex
coefficient that depend on variables $x_1, \ldots, x_N\in \setC$ where
$N\ge 1$. A \emph{(polynomial) ideal} is a subset $I\subset \setC[x]$
such that
\begin{enumerate}
\item $0\in I$,
\item $f,g\in I$ implies that $f+g \in I$,
\item $f\in I$ and $g\in \setC[x]$ implies that $fg\in I$.
\end{enumerate}

\begin{theorem}[Hilbert basis theorem]
\label{thm:HBT}
If $I\subset \setC[x]$ is an ideal, then there exists 
finitely many polynomials $f_1, \ldots, f_s\in I$ such that 
\begin{eqnarray}
\label{eq:explicit}
I &=& \left\{ \sum_{i=1}^s f_i g_i :  g_1, \ldots, g_s \in \setC[x]\, \right\}.
\end{eqnarray} 
\end{theorem}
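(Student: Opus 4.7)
The plan is to prove the Hilbert basis theorem via Gr\"obner bases, which fits naturally with the appendix's theme. The core idea is to reduce the problem about an arbitrary ideal to a problem about an ideal generated by monomials, where combinatorial arguments suffice.

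First I would fix a monomial order $\prec$ on $\setC[x]$ (for concreteness, lexicographic order with $x_1\succ x_2\succ\cdots\succ x_N$). For any nonzero $f\in \setC[x]$, write $\operatorname{LT}(f)$ for its leading term with respect to $\prec$. For an ideal $I\subset \setC[x]$, define the \emph{initial ideal}
\begin{eqnarray*}
   \operatorname{LT}(I) &=& \langle\, \operatorname{LT}(f) : f\in I\slaz\,\rangle,
\end{eqnarray*}
which by construction is a monomial ideal. The first key step is Dickson's lemma: every monomial ideal in $\setC[x_1,\ldots,x_N]$ is generated by finitely many monomials. I would prove this by induction on $N$; the case $N=1$ follows since $\setC[x_1]$ is a principal ideal domain, and the inductive step groups monomials by the exponent of $x_N$ and applies the inductive hypothesis together with a stabilization argument. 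Applying Dickson's lemma to $\operatorname{LT}(I)$ yields monomials $m_1,\ldots,m_s$ that generate it, and for each $m_i$ I pick an $f_i\in I$ with $\operatorname{LT}(f_i)=m_i$.

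Next I would show that $f_1,\ldots,f_s$ generate $I$ in the sense of \eqref{eq:explicit}. The tool here is the multivariate division algorithm: given any $f\in I$, it produces $q_1,\ldots,q_s\in \setC[x]$ and a remainder $r\in \setC[x]$ with $f=\sum_{i=1}^s q_i f_i + r$, where either $r=0$ or no term of $r$ is divisible by any $\operatorname{LT}(f_i)$. Since $r = f - \sum q_i f_i \in I$, if $r\neq 0$ then $\operatorname{LT}(r)\in \operatorname{LT}(I)=\langle m_1,\ldots,m_s\rangle$, so $\operatorname{LT}(r)$ must be divisible by some $m_i=\operatorname{LT}(f_i)$, contradicting the defining property of the remainder. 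Hence $r=0$ and $f\in \{\sum_i g_i f_i : g_i\in \setC[x]\}$. The reverse inclusion is immediate since $I$ is an ideal.

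The main obstacle is Dickson's lemma, which is the genuinely combinatorial content; everything else is bookkeeping around a suitable monomial order. An alternative route would bypass Gr\"obner bases entirely and instead proceed by induction on $N$ using the fact that $R$ Noetherian implies $R[x]$ Noetherian, with the base case $N=0$ being that $\setC$ is a field. However, since the appendix explicitly frames the appeal to the theorem via Gr\"obner bases, the leading-term approach sketched above is the better fit and moreover yields the finite generating set as an explicit Gr\"obner basis, which is exactly what gets invoked in the proof of Theorem \ref{thm:mainResult}.
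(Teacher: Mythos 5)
The paper does not prove Theorem~\ref{thm:HBT}; it states it as a standard result and cites \cite{CoxLittleOShea:2007} for it. Your proof is correct and is exactly the argument in that reference: fix a monomial order, observe that the initial ideal $\operatorname{LT}(I)$ is a monomial ideal, invoke Dickson's lemma to extract finitely many monomial generators, lift these to $f_1,\ldots,f_s\in I$, and use the multivariate division algorithm to show that any $f\in I$ has zero remainder on division by $f_1,\ldots,f_s$. This route is the natural one here because it produces the $f_i$ as a Gr\"obner basis for $I$, which is precisely the object the appendix needs for Propositions~\ref{prop:GrBaLemma} and~\ref{prop:GrBa} and for the proof of Theorem~\ref{thm:mainResult}. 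The alternative you mention (induction on the number of variables via the classical ``$R$ Noetherian $\Rightarrow$ $R[x]$ Noetherian'' argument) is shorter and avoids choosing a monomial order, but yields no constructive generating set, so it would be a worse fit for the paper's purposes. One small point worth tightening if this were to be written out in full: in Dickson's lemma the base case $N=1$ is more elementary than ``$\setC[x_1]$ is a PID'' — a monomial ideal in one variable is generated by the single monomial $x_1^d$ with $d$ the minimal occurring exponent, which needs only the well-ordering of $\Set{N}$, not unique factorization.
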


When Theorem \ref{thm:HBT} holds, we say that polynomials $f_1,
\ldots, f_s$ form a \emph{basis} for ideal $I$. Conversely, if $f_1,\ldots, f_s$ are
any polynomials in $\setC[x]$, the ideal on the right hand side in equation
\eqref{eq:explicit} is denoted by $\langle f_1,\ldots, f_s\rangle$ and called
the \emph{ideal generated by polynomials} $f_1, \ldots, f_s$.
The \emph{affine variety}  defined by polynomials $f_1, \ldots, f_s\in \setC[x]$ is 
the subset $V(f_1, \ldots, f_s)\subset \setC^N$,
\begin{eqnarray*}
  V(f_1, \ldots, f_s) &=& \{ x\in \setC^N : f_1(x)=\cdots=f_s(x)=0\,\,\}.
\end{eqnarray*}

The next proposition gives sufficient conditions for two systems of
polynomial equations to have the same solution sets.

\begin{proposition} 
\label{prop:GrBaLemma}
Suppose $f_1, \ldots, f_s$ and $g_1, \ldots, g_t$ are polynomials in $\setC[x]$.
If $\{f_i\}$ and $\{g_i\}$  generate the same ideal, that is, 
\begin{eqnarray*}
  \langle f_1,\ldots, f_s \rangle &=& \langle g_1, \ldots, g_t\rangle,
\end{eqnarray*}
then
\begin{eqnarray*}
  V(f_1, \ldots, f_s) &=& V(g_1, \ldots, g_t).
\end{eqnarray*}
\end{proposition}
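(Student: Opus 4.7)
The plan is to exploit the explicit description of the generated ideal in Theorem \ref{thm:HBT} (equation \eqref{eq:explicit}): membership of $g$ in $\langle f_1,\ldots,f_s\rangle$ means precisely that $g$ can be written as a polynomial combination $g=\sum_i h_i f_i$. Combined with the symmetry of the hypothesis, this immediately yields both inclusions of the affine varieties.

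More concretely, I would proceed as follows. First, since $\langle f_1,\ldots,f_s\rangle=\langle g_1,\ldots,g_t\rangle$, each generator $g_j$ lies in $\langle f_1,\ldots,f_s\rangle$, so by equation \eqref{eq:explicit} there exist polynomials $h_{1j},\ldots,h_{sj}\in\setC[x]$ with
\[
g_j \;=\; \sum_{i=1}^s h_{ij}\,f_i, \qquad j=1,\ldots,t.
\]
If $x\in V(f_1,\ldots,f_s)$, then $f_1(x)=\cdots=f_s(x)=0$, and evaluating the above relation at $x$ gives $g_j(x)=0$ for every $j$, so $x\in V(g_1,\ldots,g_t)$. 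This proves the inclusion $V(f_1,\ldots,f_s)\subset V(g_1,\ldots,g_t)$.

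For the reverse inclusion I would repeat the same argument with the roles of $\{f_i\}$ and $\{g_j\}$ swapped: since each $f_i$ lies in $\langle g_1,\ldots,g_t\rangle$, we can write $f_i=\sum_{j=1}^t k_{ji}g_j$ for some $k_{ji}\in\setC[x]$, and any common zero of the $g_j$ is then a common zero of the $f_i$. There is no real obstacle here; the only thing to invoke carefully is the equivalence between ideal membership and the existence of a polynomial combination, which is exactly the content of Theorem \ref{thm:HBT}/equation \eqref{eq:explicit}. The result thus follows purely from the definition of the generated ideal together with the evaluation homomorphism $\setC[x]\to\setC$ at a point.
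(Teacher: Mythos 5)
Your proof is correct and is the standard argument. The paper itself does not prove Proposition \ref{prop:GrBaLemma}; it simply states it and cites \cite{CoxLittleOShea:2007}, so there is no paper proof to compare against. Your argument — expressing each $g_j$ as a polynomial combination $g_j=\sum_i h_{ij} f_i$ via the definition of the generated ideal in equation \eqref{eq:explicit}, evaluating at a common zero of the $f_i$, and then using symmetry for the reverse inclusion — is exactly the textbook route and is complete. One small wording point: the step you need is the \emph{definition} of $\langle f_1,\ldots,f_s\rangle$ (the right-hand side of \eqref{eq:explicit}), not really the content of the Hilbert Basis Theorem, which concerns finite generation of arbitrary ideals; but the equation you cite is the right one, so this is only a matter of attribution, not of correctness.
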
  

%

We will not give a precise definition for a Gr\"obner basis.  The key
property for  Gr\"obner bases is collected in the next proposition.

\begin{proposition} 
\label{prop:GrBa}
Let $f_1, \ldots, f_s\in \setC[x]$ be polynomials such that $\langle
f_1, \ldots, f_s\rangle \neq \{0\}$. Then there exists polynomials
$g_1, \ldots, g_t\in \setC[x]$ such that
\begin{eqnarray*}
  \langle f_1, \ldots, f_s \rangle &=&   \langle g_1, \ldots, g_t \rangle.
\end{eqnarray*}
Polynomials $g_i$ are called a \emph{Gr\"obner basis} for the ideal $\langle f_1, \ldots, f_s\rangle$.
\end{proposition}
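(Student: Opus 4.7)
The plan is to prove existence by Buchberger's algorithm, invoking the Hilbert basis theorem (Theorem \ref{thm:HBT}) at the decisive termination step. First I would fix a monomial order $>$ on $\setC[x]$, for instance the lexicographic order on multi-indices, so that every nonzero $f\in\setC[x]$ acquires a well-defined leading monomial $\mathrm{LM}(f)$ and leading term $\mathrm{LT}(f)$. For a finite subset $F\subset\setC[x]$ I would write $\mathrm{LT}(F)$ for the monomial ideal generated by the leading terms of elements of $F$. Although the excerpt does not spell out the definition, the standard one is implicit in the statement: a finite set $G\subset I$ is called a \emph{Gr\"obner basis} of the ideal $I$ when $\mathrm{LT}(G)=\mathrm{LT}(I)$ as ideals of leading terms.

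Next I would introduce the two standard tools. The multivariate division algorithm, given $f$ and an ordered list $F=(h_1,\ldots,h_r)$ of nonzero polynomials, produces a remainder $\bar f^{F}$ such that no monomial of $\bar f^{F}$ is divisible by any $\mathrm{LM}(h_i)$ and $f-\bar f^{F}\in\langle F\rangle$. The $S$-polynomial $S(p,q)$ of a pair of nonzero polynomials is the canonical combination whose construction cancels the two leading terms. Buchberger's criterion, which I would state and use as a black box, asserts that $F$ is a Gr\"obner basis of $\langle F\rangle$ if and only if $\overline{S(h_i,h_j)}^{F}=0$ for every pair $i<j$.

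With these in place the algorithm is the usual loop. I would set $G_0=\{f_1,\ldots,f_s\}$ and, at stage $k$, compute $\overline{S(p,q)}^{G_k}$ for every pair of elements $p,q\in G_k$. If all these remainders vanish, Buchberger's criterion certifies $G_k$ as a Gr\"obner basis and I would output it. Otherwise I would adjoin one nonzero remainder to form $G_{k+1}$. Since only elements of $\langle f_1,\ldots,f_s\rangle$ are ever adjoined, the equality $\langle G_k\rangle=\langle f_1,\ldots,f_s\rangle$ is preserved throughout the iteration, so any output will generate the same ideal.

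The main obstacle, and the only genuine use of a hypothesis from the excerpt, is termination. Each time a nonzero remainder $r$ is adjoined, its leading monomial $\mathrm{LM}(r)$ is by the division rule not divisible by any $\mathrm{LM}(h)$ for $h\in G_k$, so $\mathrm{LT}(G_k)\subsetneq\mathrm{LT}(G_{k+1})$ strictly. This would produce a strictly increasing chain of monomial ideals in $\setC[x]$, which must stabilise after finitely many steps because $\setC[x]$ is Noetherian; this is exactly the content of Theorem \ref{thm:HBT}, applied to the union of the chain (which is itself an ideal, hence finitely generated, hence equal to some finite stage). The loop therefore halts and returns a finite set $g_1,\ldots,g_t$ with the required properties.
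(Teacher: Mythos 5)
Your proof is correct. The paper itself does not prove Proposition~\ref{prop:GrBa}; it simply refers the reader to \cite{CoxLittleOShea:2007}, and your argument -- Buchberger's algorithm with termination supplied by the ascending-chain consequence of Theorem~\ref{thm:HBT} -- is precisely the standard proof in that reference. You also correctly spot that, as literally phrased (and with the paper explicitly declining to define ``Gr\"obner basis''), the proposition is vacuous since one may take $g_i=f_i$; you repair this by supplying the implicit defining property $\mathrm{LT}(G)=\mathrm{LT}(I)$, which is what makes both the algorithm and the termination-by-strict-growth-of-$\mathrm{LT}(G_k)$ argument meaningful.
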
   

Even if computation of Gr\"obner basis computation is supported by modern
computer algebra systems, their computation can in practice be very time
consuming. The motivation for using a Gr\"obner bases is that they
typically simplify the solution process for polynomial equations. Thus
one can think of Gr\"obner bases as a way to simplify polynomial
equations without changing their solution set.  This is illustrated in
the next example.

\begin{example} 
Let $S\subset \setR^3$ be all $(x,y,z)\in \setR^3$ such that 
\begin{eqnarray}
\label{eq:originalPolEq}
 x y z  = 1, \quad 
 x z^2= y^2, \quad 
 z^2 =x y.
\end{eqnarray}
By elementary manipulation, we see that $S=\{(1,1,1)\}$. To
illustrate how to determine $S$ using a Gr\"obner a basis, let us first note that 
$S=V(f_1,f_2,f_3)\cap \setR^3$, where
$$
  f_1 = x y z -1, \quad f_2 = x z^2-y^2, \quad f_3 = z^2-xy.
$$
With computer algebra we find that a Gr\"obner basis for $\langle f_1, f_2, f_3\rangle$ is given by
$$
 g_1 = z^3-1, \quad g_2 = y^3-z, \quad g_3 = x-y^2 z.
$$
Propositions \ref{prop:GrBaLemma} and \ref{prop:GrBa} imply that $
V(f_1,f_2,f_3) = V(g_1, g_2, g_3)$. Hence $S$ coincide with real
solutions to polynomial equations
\begin{eqnarray}
\label{eq:SaltEq}
  z^3=1, \quad y^3=z, \quad x=y^2 z.
\end{eqnarray}
These last equations are easily solved and we find that $S=\{(1,1,1)\}$.

If we compare the original equations \eqref{eq:originalPolEq} to 
equations \eqref{eq:SaltEq} computed by a Gr\"obner basis, we see 
that the latter ones are more easier to  solve since they can be solved by
backsubstitution.  \proofBox
\end{example}

\section{Verifying very large polynomial identities}
\label{app:veryLarge}

The proof of Theorem \ref{thm:FkappaInvB} \ref{thm:FkappaInvB:iv}
reduces to proving equations \eqref{eq:bigEquation} which consists of
$35$ polynomial identities in $36$ variables. If we write out these
polynomial identities as text strings, they occupy almost $13$
megabytes of memory. Due to this size, Mathematica (version 7.0.1) was
not able to verify the identities in a reasonable time.  In this
appendix we describe a recursive method that is able to verify these
identities.  On a computer with two Intel E8400 $3$GHz processors and
$3.7$ gigabytes of RAM the method finished in $10$ hours. 
The method relies on the following corollary to Taylor's theorem with
a Lagrange error term.

\begin{proposition} 
\label{prop:reducePoly} 
Suppose $f$ is a polynomial $f\colon \setR^N\to \setR$ 
in variables $x_1, \ldots, x_N \in \setR$. Furthermore, suppose that
\begin{enumerate}
\item \label{tay:coI}
There exists a finite $K\in \{1,2,\ldots\}$ such that 
\begin{eqnarray*}
 \frac{\partial^K f}{\partial x_1^K}(x) &=& 0 \quad \mbox{for all}\,\, x\in \setR^N.
\end{eqnarray*}
\item Polynomials $Z^0, \ldots, Z^{K-1}\colon \setR^{N-1}\to \setR$ defined as
\label{tay:coII}
\begin{eqnarray*}
 Z^{r}(y) &=& \frac{\partial^{r} f}{\partial x_1^{r}}(0,y), \quad \, y\in \setR^{N-1}, \quad r=0, \ldots, K-1
\end{eqnarray*}
are zero polynomials.
\end{enumerate}
Then $f$ is the zero polynomial.
\end{proposition}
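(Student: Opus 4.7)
The plan is to apply Taylor's theorem in the single variable $x_1$ while treating $y=(x_2,\ldots,x_N)\in\setR^{N-1}$ as parameters. Hypothesis \ref{tay:coI} says that $\partial^K f/\partial x_1^K \equiv 0$, which means that for each fixed $y$, the map $x_1\mapsto f(x_1,y)$ is a polynomial in $x_1$ of degree at most $K-1$. Consequently, the Taylor expansion of $f(\cdot,y)$ around $x_1=0$ terminates after $K$ terms and the Lagrange remainder vanishes identically. This gives the exact identity
\begin{eqnarray*}
f(x_1,y) &=& \sum_{r=0}^{K-1} \frac{1}{r!}\, \frac{\partial^r f}{\partial x_1^r}(0,y)\, x_1^r \;=\; \sum_{r=0}^{K-1} \frac{1}{r!}\, Z^r(y)\, x_1^r,
\end{eqnarray*}
valid for all $(x_1,y)\in\setR^N$.

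Next I would invoke hypothesis \ref{tay:coII}, which asserts that each $Z^r$ is the zero polynomial on $\setR^{N-1}$, hence $Z^r(y)=0$ for every $y$ and every $r\in\{0,\ldots,K-1\}$. Substituting this into the displayed expansion yields $f(x_1,y)=0$ for all $(x_1,y)\in\setR^N$, which forces $f$ to be the zero polynomial (a polynomial vanishing everywhere on $\setR^N$ has all coefficients equal to zero).

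There is essentially no obstacle to the argument; the only subtlety worth flagging is that hypothesis \ref{tay:coI} must be used to guarantee that the Taylor expansion in $x_1$ is finite and exact, and not merely an asymptotic statement. Once that observation is in place, the proposition is immediate from hypothesis \ref{tay:coII}. I would therefore write the proof as a single short paragraph containing the displayed Taylor identity followed by the substitution $Z^r\equiv 0$.
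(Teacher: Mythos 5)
Your argument is correct and matches the route the paper itself signals: the paper states this proposition as ``a corollary to Taylor's theorem with a Lagrange error term'' and gives no further proof, and your single-variable Taylor expansion in $x_1$ with vanishing remainder, followed by substitution of $Z^r\equiv 0$, is exactly that corollary spelled out.
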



Proposition \ref{prop:reducePoly} shows that to verify identity $f=0$
we only need to verify identities $Z^0=0$, $\ldots$,
$Z^{K-1}=0$. Since these identities are obtained by
differentiating $f$ and by setting one variable to zero, they are
typically shorter and easier to manipulate than the original $f$. By
recursively applying Proposition \ref{prop:reducePoly}, the proof of
$f=0$ divides into smaller and smaller polynomial identities that
eventually can be verified using Mathematica's internal Expand
routine.
%
%
The implementation details are as follows. To verify $f=0$ we applied
Proposition \ref{prop:reducePoly} recursively until the polynomial had
less than $27$ variables (out of $36$ original).  For each application of
Proposition \ref{prop:reducePoly} we used the (non-optimal) constant 
$K=5$.


\begin{thebibliography}{LSTV94}

\bibitem[AMR88]{AbrahamMarsdenRatiu:1988}
R.~Abraham, J.E. Marsden, and T.~Ratiu, \emph{Manifolds, tensor analysis, and
  applications}, Springer, 1988.

\bibitem[BE80]{BornWolf}
M.~Born and E.Wolf, \emph{Principles of optics}, Cambridge University Press,
  1980.

\bibitem[BH96]{BH1996}
D.~Baldomir and P.~Hammond, \emph{Geometry of electromagnetic systems}, Oxford
  University Press, 1996.

\bibitem[CLO07]{CoxLittleOShea:2007}
D.~Cox, J.~Little, and D.~O'Shea, \emph{Ideals, varieties, and algorithms},
  Springer, 2007.

\bibitem[Dah04]{Dahl2004}
M.~Dahl, \emph{Contact geometry in electromagnetism}, Progress in
  Electromagnetic Research \textbf{46} (2004), 77--104.

\bibitem[Dah06]{DahlPIER:2006}
\bysame, \emph{Electromagnetic {Gaussian} beams using {Riemannian geometry}},
  Progress In Electromagnetics Research \textbf{60} (2006), 265--291.

\bibitem[Dah09]{Dahl:2009}
\bysame, \emph{Electromagnetic fields from contact- and symplectic geometry},
  preprint (2009).

\bibitem[DKS89]{Dray:1989}
T.~Dray, R.~Kulkarni, and J.~Samuel, \emph{Duality and conformal structure},
  Journal of Mathematical Physics \textbf{30} (1989), no.~6, 1306--1309.

\bibitem[Ehr91]{Ehrlich:1991}
P.~Ehrlich, \emph{Null cones and pseudo-{Riemannian} metrics}, Semigroup forum
  \textbf{43} (1991), no.~1, 337--343.

\bibitem[FB11]{FavaroBergamin:2011}
A.~Favaro and L.~Bergamin, \emph{The non-birefringent limit of all linear,
  skewonless media and its unique light-cone structure}, Annalen der Physik
  (2011).

\bibitem[HO03]{Obu:2003}
F.W. Hehl and Y.N. Obukhov, \emph{Foundations of classical electrodynamics:
  Charge, flux, and metric}, Progress in Mathematical Physics, Birkh\"auser,
  2003.

\bibitem[Hsi95]{Hsiung:1995}
C.~C. Hsiung, \emph{Almost complex and complex structures}, World Scientific,
  1995.

\bibitem[Iti05]{Itin:2005}
Y.~Itin, \emph{Nonbirefringence conditions for spacetime}, Physical Review D
  \textbf{72} (2005), no.~8, 087502.

\bibitem[Jad79]{Jadczyk:1979}
A.Z. Jadczyk, \emph{Electromagnetic permeability and the vacuum and light-cone
  structure}, Bulletin de L'Academie Polonaise des sciences --- S\'eries des
  sciences physiques et astron. \textbf{27} (1979), no.~2, 91--94.

\bibitem[Kac02]{Kachalov:2002}
A.P. Kachalov, \emph{Nonstationary electromagnetic {Gaussian} beams in
  inhomogeneous anisotropic media}, Journal of Mathematical Sciences
  \textbf{111} (2002), no.~4, 3667--3677.

\bibitem[Kac04]{Kachalov:2004}
\bysame, \emph{{Gaussian beams for Maxwell Equations on a manifold}}, Journal
  of Mathematical Sciences \textbf{122} (2004), no.~5, 3485--3501.

\bibitem[Kac05]{Kachalov:2005}
\bysame, \emph{Gaussian beams, the {Hamilton-Jacobi} equations, and {Finsler}
  geometry}, Journal of Mathematical Sciences \textbf{127} (2005), no.~6,
  2374--2388, English translation of article in Zapiski Nauchnykh Seminarov
  POMI, Vol. 297, 2003, pp. 66–92.

\bibitem[KKL01]{KKL:2001}
A.~Kachalov, Y.~Kurylev, and M.~Lassas, \emph{Inverse boundary spectral
  problems}, Chapman \& Hall/CRC, 2001.

\bibitem[LH04]{LamHeh:2004}
C.~L\"ammerzahl and F.W. Hehl, \emph{Riemannian light cone from vanishing
  birefringence in premetric vacuum electrodynamics}, Physical Review D
  \textbf{70} (2004), no.~10, 105022.

\bibitem[Low05]{Low:2005}
R.J. Low, \emph{The space of null geodesics (and a new causal boundary)},
  Springer Lecture Notes in Physics \textbf{692} (2005), 35--50.

\bibitem[LSTV94]{LiSiTrVi:1994}
I.V. Lindell, A.H. Sihvola, S.A. Tretyakov, and A.J. Viitanen,
  \emph{Electromagnetic waves in chiral and bi-isotropic media}, Artech House,
  1994.

\bibitem[Mos71]{Moses:1971}
H.E. Moses, \emph{Eigenfunctions of the curl operator, rotationally invariant
  {Helmholtz} theorem, and applications to electromagnetic theory and fluid
  mechanics}, SIAM Journal of Applied Mathematics \textbf{21} (1971), no.~1,
  114--144.

\bibitem[MS08]{MingSan:2008}
E.~Minguzzi and M.~S\'anchez, \emph{The causal hierarchy of spacetimes}, Recent
  Developments in Pseudo-Riemannian Geometry (D.V. Alekseevsky and H.~Baum,
  eds.), European Mathematical Society, 2008.

\bibitem[Muj06]{Mujica:2006}
J.~Mujica, \emph{Holomorphic functions on {Banach} spaces}, Note di Mathematica
  \textbf{25} (2005/2006), no.~2, 113--138.

\bibitem[NP94]{NaulinPabst:1994}
R.~Naulin and C.~Pabst, \emph{The roots of a polynomial depend continuously on
  its coefficients}, Revista Colombiana de Matem\'aticas \textbf{28} (1994),
  35--37.

\bibitem[OFR00]{ObuFukRub:00}
Y.N. Obukhov, T.~Fukui, and G.F. Rubilar, \emph{Wave propagation in linear
  electrodynamics}, Physical Review D \textbf{62} (2000), no.~4, 044050.

\bibitem[OH99]{ObukhovHehl:1999}
Y.N. Obukhov and F.W. Hehl, \emph{Spacetime metric from linear
  electrodynamics}, Physics Letters B \textbf{458} (1999), 466--470.

\bibitem[OR02]{ObuRub:2002}
Y.N. Obukhov and G.F. Rubilar, \emph{Fresnel analysis of wave propagation in
  nonlinear electrodynamics}, Physical Review D \textbf{66} (2002), no.~2,
  024042.

\bibitem[Pop02]{Popov:2002}
M.M. Popov, \emph{Ray theory and {Gaussian beam} method for geophysicists},
  EDUFBA, 2002.

\bibitem[Pra94]{Prasolov:1994}
V.V. Prasolov, \emph{Problems and theorems in linear algebra}, Amererican
  Mathematical Society, 1994.

\bibitem[PSW09]{PunziEtAl:2009}
R.~Punzi, F.P. Schuller, and M.N.R. Wohlfarth, \emph{Propagation of light in
  area metric backgrounds}, Classical and Quantum Gravity \textbf{26} (2009),
  035024.

\bibitem[Ral82]{Ralston:1982}
J.~Ralston, \emph{Gaussian beams and the propagation of singularities}, Studies
  in partial differential equations MAA Studies in Mathematics 23 (1982),
  206--248.

\bibitem[Rub02]{Rubilar2002}
G.F. Rubilar, \emph{Linear pre-metric electrodynamics and deduction of the
  light cone}, Annalen der Physik \textbf{11} (2002), 717--782.

\bibitem[Sch71]{Shoenberg:1971}
M.~Sch\"onberg, \emph{Electromagnetism and gravitation}, Rivista Brasileira de
  Fisica \textbf{1} (1971), 91--122.

\bibitem[Sco05]{Scorpan:2005}
A.~Scorpan, \emph{The wild world of 4-manifolds}, AMS, 2005.

\bibitem[SWW10]{Schuller:2010}
F.P. Schuller, C.~Witte, and M.N.R. Wohlfarth, \emph{Causal structure and
  algebraic classification of non-dissipative linear optical media}, Annals of
  Physics \textbf{325} (2010), no.~9, 1853--1883.

\bibitem[Sza02]{Szabo:2002}
F.~Szabo, \emph{Linear algebra: An introduction using {Maple}}, Academic press,
  2002.

\bibitem[Tro94]{Trofimov:1994}
V.V. Trofimov, \emph{Introduction to geometry of manifolds with symmetry},
  Kluwer, 1994.

\end{thebibliography}
\providecommand{\bysame}{\leavevmode\hbox to3em{\hrulefill}\thinspace}
\providecommand{\MR}{\relax\ifhmode\unskip\space\fi MR }
\providecommand{\MRhref}[2]{%
  \href{http://www.ams.org/mathscinet-getitem?mr=#1}{#2}
}
\providecommand{\href}[2]{#2}

\end{document}